\newcommand{\CCo}{\mathbb{C}}
\newcommand{\RRo}{\mathbb{R}}
\newcommand{\QQo}{\mathbb{Q}}
\newcommand{\PPo}{\mathbb{P}}
\renewcommand{\aa}{\bm{a}}
\newcommand{\cc}{\bm{c}}
\newcommand{\dd}{\bm{d}}
\newcommand{\ff}{\bm{f}}
\newcommand{\pp}{\bm{p}}
\newcommand{\uu}{\bm{u}}
\newcommand{\bvv}{\bm{v}}
\newcommand{\ww}{\bm{w}}
\newcommand{\xx}{\bm{x}}
\newcommand{\yy}{\bm{y}}
\newcommand{\zz}{\bm{z}}
\newcommand{\CC}{\bm{C}}
\newcommand{\KK}{\bm{K}}
\newcommand{\OO}{\bm{O}}
\newcommand{\PP}{\bm{P}}
\newcommand{\QQ}{\bm{Q}}
\newcommand{\RR}{\bm{R}}
\newcommand{\XX}{\bm{X}}
\newcommand{\afrak}{\mathfrak{a}}
\newcommand{\Afrak}{\mathfrak{A}}
\newcommand{\Efrak}{\mathfrak{E}}
\newcommand{\scal}{\mathcal{s}}
\newcommand{\Dcal}{\mathcal{D}}
\newcommand{\Ecal}{\mathcal{E}}
\newcommand{\Hcal}{\mathcal{H}}
\newcommand{\Kcal}{\mathcal{K}}
\newcommand{\Lcal}{\mathcal{L}}
\newcommand{\Pcal}{\mathcal{P}}
\newcommand{\Rcal}{\mathcal{R}}
\newcommand{\Vcal}{\mathcal{V}}
\newcommand{\Wcal}{\mathcal{W}}
\newcommand{\scrC}{\mathscr{C}}
\newcommand{\scrG}{\mathscr{G}}
\newcommand{\scrP}{\mathscr{P}}
\newcommand{\scrR}{\mathscr{R}}
\newcommand{\scrV}{\mathscr{V}}
\renewcommand{\phi}{\varphi}
\algrenewcommand\algorithmicrequire{\textbf{Input:}}
\algrenewcommand\algorithmicensure{\textbf{Output:}}
\newcommand{\V}{\bm{V}}
\newcommand{\I}{\bm{I}}
\renewcommand{\KK}{\QQ}
\newcommand{\KKbar}{\CC}
\renewcommand{\scal}[1]{\left\langle #1 \right\rangle}
\newcommand{\XXi}[1][n]{x_1,\dotsc,x_{#1}}
\renewcommand{\OO}{\bm{0}}
\newcommand{\ZZo}{\mathbb{Z}}
\newcommand{\Avar}{\textbf{\texttt{A}}}
\newcommand{\Dvar}{\textbf{\texttt{D}}}
\newcommand{\avar}{\afrak}
\renewcommand{\PP}{\PPo}
\newcommand{\PPn}[1][n]{\PP^{#1}}
\newcommand{\Hinf}{\mathcal{H}^{\infty}}
\newcommand{\Pproj}[1]{\mathfrak{p}_{#1}}
\newcommand{\Line}[2]{\Lcal(#1,#2)}
\newcommand{\Grass}{\mathbb{G}}
\newcommand{\Sec}{\textsf{Sec}}
\newcommand{\Tri}{\textsf{Tri}}
\newcommand{\Tang}{\textsf{Tg}}
\newcommand{\SecCotg}{\textsf{CoTg}}
\newcommand{\albe}{\yy}
\newcommand{\Kbarpow}{\KKbar[[x_1-\alpha]]}
\newcommand{\dx}[1]{\partial_{x_1} #1}
\newcommand{\dy}[1]{\partial_{x_2} #1}
\newcommand{\dxx}[1]{\partial^2_{x_1} #1}
\newcommand{\dyy}[1]{\partial^2_{x_2} #1}
\newcommand{\dxy}[1]{\partial^2_{x_1x_2} #1}
\newcommand{\du}[1]{\partial_{u} #1}
\newcommand{\dv}[1]{\partial_{v} #1}
\newcommand{\K}{\Kcal}
\newcommand{\Wo}{\Wcal^{\circ}}
\newcommand{\WoCcur}[1][]{\Wo(\pi_1,\Ccur^{#1})}
\newcommand{\WoCcurde}[1][]{\Wo(\pi_1,\Ccur^{#1}_2)}
\newcommand{\KCcur}[1][]{\K(\pi_1,\Ccur^{#1})}
\newcommand{\KCcurde}[1][]{\K(\pi_1,\Ccur_2^{#1})}
\newcommand{\KCcurR}{\K(\pi_1,\CcurR)}
\newcommand{\KCcurRde}{\K(\pi_1,\CcurRde)}
\newcommand{\KCcurRode}{\K(\pi_1,\CcurRode)}
\newcommand{\Ccur}{\scrC}
\newcommand{\Pset}{\Pcal}
\newcommand{\PCcur}{\overline{\Ccur}}
\newcommand{\CcurR}{\Ccur_{\RR}}
\newcommand{\CcurRde}{\Ccur_{2,\RR}}
\newcommand{\CcurRtre}{\Ccur_{3,\RR}}
\newcommand{\PsetR}{\Pset_{\RR}}
\newcommand{\PsetRde}{\Pset_{2,\RR}}
\newcommand{\CcurRo}{\Ccur_{\RRo}}
\newcommand{\CcurRode}{\Ccur_{2,\RRo}}
\newcommand{\PsetRode}{\Pset_{2,\RRo}}
\newcommand{\Agen}{\Afrak}
\newcommand{\Ggen}{\Efrak}
\newcommand{\Hspace}[2][]{(\mathsf{H_{#2}{#1}})}
\newcommand{\Gg}{\scrG}
\newcommand{\Eg}{\Ecal}
\newcommand{\Vg}{\Vcal}
\newcommand{\Vapp}{\Vg_{\textup{app}}}
\newcommand{\Vcrit}{\Vg_{\K}}
\newcommand{\Ggp}{\Gg_2}
\newcommand{\Vgp}{\Vg_2}
\newcommand{\Egp}{\Eg_2}
\newcommand{\Hiso}{\Hcal}
\newcommand{\Hun}{\Hiso_1}
\newcommand{\Hisot}{\tilde{\Hcal}}
\newcommand{\Hunt}{\tilde{\Hiso}_1}
\newcommand{\Vseq}{\scrV}
\newcommand{\Vseqapp}{\Vseq_{\textup{app}}}
\newcommand{\VseqPparam}{\Vseq_{\Pparam}}
\newcommand{\vtheta}{\vartheta}
\newcommand{\Rparam}{\scrR}
\newcommand{\Cparam}{\scrR}
\newcommand{\Dparam}{\scrR_2}
\newcommand{\Pparam}{\scrP}
\newcommand{\Qparam}{\scrP_2}
\newcommand{\Zparam}{\mathsf{Z}}
\newcommand{\deltaP}{\mu}
\newcommand{\tauP}{\kappa}
\newcommand{\qapp}{q_{\textup{app}}}
\newcommand{\infx}{\preceq_1}
\newcommand{\TopoPlane}{\textsf{Topo2D}\xspace}
\newcommand{\AppSing}{\textsf{ApparentSingularities}\xspace}
\newcommand{\NodeRes}{\textsf{NodeResolution}\xspace}
\newcommand{\IndexConComp}{\textsf{IndConnectComp}\xspace}
\newcommand{\ConnectCurve}{\textsf{ConnectCurve}\xspace}
\newcommand{\SA}{s.a.\xspace}
\newcommand{\SAC}{\SA connected\xspace}
\newcommand{\SACC}{\SA connected component\xspace}
\newcommand{\SACCs}{\SA connected components\xspace}
\newcommand{\ZD}{zero-dimensional\xspace}
\newcommand{\OD}{one-di\-men\-sio\-nal\xspace}
\newcommand{\ZDP}{\ZD parametrization\xspace}
\newcommand{\ZDPs}{\ZD parametrizations\xspace}
\newcommand{\ODP}{\OD pa\-ra\-met\-ri\-za\-tion\xspace}
\newcommand{\ODPs}{\OD parametrizations\xspace}
\newcommand{\NEZO}{non-empty Zariski open\xspace}
\newcommand{\et}{\quad \text{and} \quad}
\newcommand{\Otilde}{\tilde{O}}
\newcommand{\Ahom}{\tilde{A}}
\DeclareMathOperator{\app}{app}
\DeclareMathOperator{\sing}{sing}
\DeclareMathOperator{\jac}{Jac}
\DeclareMathOperator{\reg}{reg}
\DeclareMathOperator{\GL}{GL}
\DeclareMathOperator{\Res}{Res}
\DeclareMathOperator{\sr}{sr}
\title{Algorithm for Connectivity Queries on Real Algebraic Curves}
\author{Md Nazrul Islam}
\affiliation{\institution{Diebold Nixdorf}
	\city{London}
	\country{Canada}}
\email{mdnazrul.islam@dieboldnixdorf.com} 
\author{Adrien Poteaux}
\affiliation{\institution{Univ. Lille, CNRS, Centrale Lille, UMR 9189 CRIStAL}
  \postcode{F-59000}
  \city{Lille}
  \country{France}
}
\email{adrien.poteaux@univ-lille.fr}
\author{R\'emi Pr\'ebet}
\affiliation{\institution{Sorbonne Universit\'e, \textsc{CNRS}, \textsc{LIP6}}
	\city{F-75005 Paris}
	\postcode{75252}\country{France}}
\email{remi.prebet@lip6.fr}
\theoremstyle{plain}
\newtheorem{theorem}{Theorem}[section]
\newtheorem{lemma}[theorem]{Lemma}
\newtheorem{proposition}[theorem]{Proposition}
\newtheorem*{proposition*}{Proposition}
\newtheorem{corollary}[theorem]{Corollary}
\theoremstyle{definition}
\newtheorem{definition}[theorem]{Definition}
\theoremstyle{remark}
\newcommand{\remi}[1]{\textcolor{black}{#1}\xspace}
\newcommand{\hide}[1]{}
\newcommand{\adrien}[1]{\textcolor{black}{#1}\xspace}
\newcommand{\review}[1]{\textcolor{black}{#1\xspace}}
\thanks{We thank I. Bannwarth, M. Safey El Din
  and \'{E}. Schost for sharing some preliminary private notes on this
  problem, which helped to understand the difficulties to lift.
  The last author is supported by the joint \grantsponsor{anr}{Agence
    nationale de la recher\-che (ANR)}{https://anr.fr}
  \grantsponsor{fwf}{Austrian Science Fund FWF}{https://www.fwf.ac.at/en/} grant
  agreement \grantnum{anr}{ANR-FWF ANR-19-CE48-0015} \textsc{ECARP} project.}
\begin{document}

\begin{abstract}
We consider the problem of answering connectivity queries on a real algebraic
curve. The curve is given as the real trace of an algebraic curve, assumed to
be in generic position, and being defined by some rational parametrizations.
The query points are given by a zero-dimensional parametrization.

We design an algorithm which counts the number of connected components of the 
real curve under study, and decides which query point lie in which connected 
component, in time log-linear in $N^6$, where $N$ is the maximum of the degrees 
and coefficient bit-sizes of the polynomials given as input.
This matches the currently best-known bound for computing the topology of real 
plane curves.

The main novelty of this algorithm is the avoidance of the computation of the
complete topology of the curve. 
\end{abstract}

\maketitle
\section{Introduction}
\label{sec:intro}
This work addresses the problem of designing an algorithm for
answering  connectivity queries on real algebraic curves in $\RR^n$,
defined as real traces of algebraic curves of $\KKbar^n$.
\vspace*{-0.3em}
\paragraph*{Motivation and problem statement} 
Consider a real field $\KK$\footnote{Note that in Sections
  \ref{sec:generic} and \ref{sec:apparent}, $\KK$ can be any
  arbitrary field of characteristic 0.}, its real closure $\RR$ and its
algebraic closure $\KKbar$. For $n \geq 1$, let $\XX=(\XXi)$ be a
sequence of indeterminates, and denote $\KK[\XX]$ and $\KKbar[\XX]$
the rings of multivariate polynomials in the $x_i$'s, with
coefficients in resp. $\KK$ and $\KKbar$. We define an \emph{algebraic set}
$\Ccur \subset \KKbar^n$ as the set of common zeros $\V(f_1,\dotsc, f_p)$ of a
sequence of polynomials $(f_1,\dotsc, f_p)\subset \KKbar[\XX]$.
$\I(\Ccur) \subset \KKbar[\XX]$ is the radical of the ideal
$\scal{f_1\dotsc,f_p}$ generated by the $f_j$'s, that is the
  \emph{ideal of definition of $\Ccur$}. The \emph{function ring}
$\KKbar[\Ccur]$ of polynomial functions defined on $\Ccur$ is
$\KKbar[\XX]/\I(\Ccur)$. If
$\I(\Ccur) \subset \KK[\XX]$, we also denote $\KK[\Ccur]$ by
$\KK[\XX]/\I(\Ccur)$. Finally, $\Ccur$ is an
\emph{algebraic curve} if $\I(\Ccur)$ is equidimensional of
dimension 1, and \emph{plane} if contained in
some plane of $\KKbar^n$.

In this document, $\Ccur$ is an algebraic curve such that
$\I(\Ccur) \subset \KK[\XX]$. Given a generating system $\ff$ of
$\I(\Ccur)$, $\jac(\ff)$ is the Jacobian matrix of $\ff$,
$\sing(\Ccur)$ the set of singular points of $\Ccur$ (i.e. the
points where $\jac(\ff)$ has rank less than $n-2$; it is a finite subset
of $\Ccur$) and $\reg(\Ccur) = \Ccur - \sing(\Ccur)$. For all
$\xx \in \reg(\Ccur)$, $T_{\xx}\Ccur$ is the right-kernel of
$\jac(\ff)$: it is the tangent line of $\Ccur$ at $\xx$.  For
$1\leq i \leq n$ we let $\pi_i:\KKbar^n \to \KKbar^i$ be the
canonical projection on the first $i$ variables. If
$\Ccur_2\subset \KKbar^2$ is the Zariski closure of
$\pi_2(\Ccur)$, the set of apparent singularities of $\Ccur_2$ is
$\app(\Ccur_2) = \sing(\Ccur_2) - \pi_2(\sing(\Ccur))$. These
are the singularities introduced by $\pi_2$. A singular point of
$\Ccur_2$ is called a node if it is an ordinary double point (see
\cite[\S3.1]{Ka2008}).  We refer to \cite{Sh2013} for definitions and
propositions about algebraic sets.

For any $\phi \in \KKbar[\Ccur]$, we denote by $\Wo(\phi,\Ccur)$ the set of 
\emph{critical points of $\phi$ on $\Ccur$}, that is the set of points $\xx \in 
\reg(\Ccur)$ such that $d_{\xx}\phi:T_{\xx}\Ccur\to \KKbar$ is not surjective.
Then we note 
\[
  \K(\phi,\Ccur)=\Wo(\phi,\Ccur)\cup \sing(\Ccur)
\]
the set of \emph{singular points of $\phi$ on $\Ccur$}.

To satisfy some genericity assumptions, we will need to perform some
linear changes of variables. Given $A \in \GL_n(\KKbar)$, for
$f \in \KKbar[\XX]$, $f^A$ will denote the polynomial $f(AX)$. For
$V \subset \KKbar^n$, we denote by $V^A$ the image of $V$ by the map
$\Phi_A:\xx \mapsto A^{-1}\xx$. Thus, for
$\ff = (f_1, \dotsc, f_p)\subset \KKbar[\XX]$ we have
$\V(\ff^A) = \Phi_A(\V(\ff)) = \V(\ff)^A$.

A \emph{semi-algebraic (\SA) set} $S\subset\RR^n$ is the set of solutions of a
finite system of polynomial equations and inequalities with coefficients in
$\RR$. We say that $S$ is \emph{\SAC} if for any $\yy, \yy' \in S$, $\yy$ and
$\yy'$ can be connected by a \emph{\SA path} in $S$, that is an injective
continuous \SA function $\gamma\colon[0,1]\to S$ such that $\gamma(0)=\yy$ and
$\gamma(1)=\yy'$. A \SA set $S$ can be decomposed into finitely many
\emph{\SACCs} which are \SAC \SA sets that are both closed and open in $S$. We
refer to \cite{BPR2016} and \cite{BCR1998} for definitions and propositions
about \SA sets and functions. In this work, the \SA sets in consideration
will mainly be real traces of algebraic sets of $\KKbar^n$ (defined by
polynomials with coefficients in $\RR$). In particular, we
will note e.g. $\CcurR$ and $\CcurRde$, respectively the real traces of
$\Ccur$, $\Ccur_2$. Then, e.g. $\KCcur\cap\RR^n$ and $\KCcurde\cap\RR^2$
will be denoted by $\KCcurR$ and $\KCcurRde$.
\smallskip

In this paper, we address the problem of designing an \textbf{algorithm for 
answering  connectivity queries on real algebraic curves in $\RR^n$}, defined 
as real traces of algebraic curves of $\KKbar^n$. More precisely, given 
representations of an algebraic curve $\Ccur$ and a finite set $\Pset$ of 
points of $\Ccur$, we want to compute a partition of $\Pset$, grouping the 
points lying in the same \SACC of $\CcurR$.

It is a problem of importance in symbolic computation, and more specifically, 
in effective real algebraic geometry. Indeed, using the notion of 
roadmaps, introduced by Canny in \cite{Ca1988,Ca1993}, one can reduce 
connectivity queries in real algebraic sets of arbitrary dimension to the such 
queries on real algebraic curves. 
Moreover, algorithms computing such roadmaps, on input a real algebraic set, 
has 
been continuously improved in a series of recent works 
\cite{BPR1999,SS2017,BR2014,BRSS2014}, making now tractable challenging 
problems in 
applications such as robotics \cite{Ca1988,CSS2020,CPSSW2022,CSS2023}.

We say that $f \in \ZZo[x_1,\dotsc,x_n]$ has magnitude
$(\delta,\tau)$, if the total degree of $f$ is bounded by $\delta$ and
all coefficients have absolute values at most $2^\tau$. This extends
to a sequence of polynomials by bounding all entries in the same
way. Complexity results are expressed with $(\delta,\tau)$ bounding
the magnitude of the polynomials defining $\Ccur$.
\review{Moreover, we ignore logarithmic factors using the soft-Oh notation 
$\Otilde(g)$ for denoting the class $g\log(g)^{O(1)}$.}
\paragraph*{Prior works}
One can reduce our problem to a piecewise linear
approximation sharing the same topology as the curve under study.

Computing the topology of plane algebraic curves in $\RRo^2$
is extensively studied: by subdivision algorithm
\cite{BCGY2008,LMP2008}, variants of Cylindrical Algebraic
Decomposition methods
\cite{BEKS2013,CLPPRT2010,DDRRS2022,Di2009,DRR2014,EKW2007,
  GK1996,KS2015,KS2012,MSW2015,SW2005,DET2007,DET2009}, or also a
hybrid approach such as \cite{AMW2008}. In particular, \cite{KS2015,
  DDRRS2022} obtain the best-known complexity bound in
$\Otilde(\delta^{5}(\delta+\tau))$, by computing quantitative bounds
on (bivariate) real root isolation of the considered polynomials.

The problem in $\RRo^3$ has been less studied. This is done through
various approaches such as computing the topology of the projection
on various planes \cite{AS2005,GLMT2005,CJL2013} or lifting the plane
projection by algebraic considerations \cite{Ka2008,DMR2008,DMR2012}.
Yet, few of these papers give a complexity bound for the
computation of such topology \cite{CJL2013,DMR2012}, and \cite{JC2021}
obtains the best-known complexity in
$\Otilde(\delta^{19}(\delta+\tau))$.
\paragraph*{Main result} 
Under genericity assumptions, we reduce the study of a curve $\Ccur$ in 
$\RRo^n$ to the one of the its image $\CcurRtre$ by the projection $\pi_3$, as 
their real traces generically share the same connectivity properties.
Moreover, by refining the approach developed in \cite{NP2011} (based on
\cite{Ka2008}), we show that \emph{one does not need to compute the
topology of $\CcurRtre$ in order to answer connectivity queries}.
More precisely, under genericity assumptions, that we made explicit 
below, we first compute the topology of $\CcurRde$ i.e. an isotopic graph.
Next, the connectivity of $\CcurRtre$ i.e. a homeomorphic graph, is deduced 
from the topology of $\CcurRde$, adapting results from \cite{Ka2008}.
A geometric outcome is that the topological analysis
needed to be done at some special points of $\CcurRde$, which are
called nodes, can be much simplified when one only needs to answer
connectivity queries. This has a significant impact on the
complexity.

Before providing our complexity result, let us
introduce how our geometric objects are encoded.
For a univariate function $\phi$, $\phi'$ is its derivative.
For a bivariate function $\psi$ in the variables $x_1$ and $x_2$, we let
$\dx{\psi},\dy{\psi}$, $\dxx{\psi}$, $\dyy{\psi}$ and $\dxy{\psi}$ be
respectively the simple and double derivative with respect to the index 
variable(s).

To encode finite sets of points with algebraic coordinates over a field 
$\KK$, we use \emph{\ZDPs} $\Pparam = 
(\Omega, \lambda)$ such that
\begin{itemize}
\item $\Omega = (\omega, \rho_1, \ldots, \rho_n)\subset\KK[u]$
where $u$ is a new variable, $\omega$ is a monic square-free polynomial, and
$\deg(\rho_i) <  \deg(\omega)$;
\item $\lambda$ is a linear form $\lambda_1 x_1 + \cdots + \lambda_n x_n$ in 
$\KK[x_1, \ldots, x_n]$ such that
$\lambda_1 \rho_1 + \cdots + \lambda_n \rho_n = u \du{\omega} \mod \omega$.
\end{itemize}
We define the {\em degree} of such a parametrization $\Pparam$ as the degree of 
the polynomial $\omega$, and we say that it encodes the finite set:
\[
  \Zparam(\Pparam) = \left \{\left (\rho_1/{\du{\omega}}, 
  \ldots, 
  \rho_n/{\du{\omega}}
  \right)(\vartheta) \in \KKbar^n 
  \mid \omega(\vartheta) = 0\right \}.
\]

Similarly, we encode algebraic curves with \ODPs over $\KK$, i.e.
$\Cparam = (\Omega, (\lambda, \mu))$ with: 
\begin{itemize}
\item $\Omega = (\omega, \rho_1, \ldots, \rho_n)\subset\KK[u,v]$ with $u$ and 
$v$ new variables, $\omega$ square-free and monic in $u$ 
and $v$, and  $\deg(\rho_i) < \deg(\omega)$;
\item $\lambda=\lambda_1 x_1 + \cdots + \lambda_n x_n$,
  $\mu=\mu_1 x_1 + \cdots + \mu_n x_n$ are linear forms
\end{itemize}
such that $\left\{
  \begin{array}{ll}
    \lambda_1 \rho_1 + \cdots + \lambda_n \rho_n = u \dv{\omega} \mod 
\omega\\\mu_1 \rho_1 + \cdots + \mu_n \rho_n = v \dv{\omega} \mod \omega
  \end{array}
\right.$\\[0.1em]Such a data-structure encodes the algebraic curve $\Zparam(\Cparam)$, defined 
as the Zariski closure of the following locally closed set of $\KKbar^n$:
\[
 \left \{\left (\rho_1/{\dv{\omega}}, 
\ldots, 
\rho_n/{\dv{\omega}}
\right)(\vartheta, \eta) \in \KKbar^n
\,\middle|\,
\omega(\vartheta, \eta) = 0, \:\dv{\omega}(\vartheta, 
\eta)\neq 0\right \}.
\]
We define the {\em degree} of such a parametrization $\Cparam$ as the degree of 
$\omega$, which coincides with the degree of $\Zparam(\Cparam)$. Note that such 
a parametrization $\Cparam$ of degree $\delta$ involves $O(n\delta^2)$ 
coefficients.

We now give our aforementioned genericity properties, which can be seen as
a generalization of the ones in \cite{Ka2008}:
let $\Ccur\subset \KKbar^{n}$ be an algebraic curve and 
$\Pset\subset\reg(\Ccur)$ finite. $(\Ccur,\Pset)$ satisfies $\Hspace{}$ if:
\begin{enumerate}[label=${\Hspace{\arabic*}}$]
 \item\label{hyp:noeth} for $1\leq i \leq n$, $\KK[\Ccur]$ is integral over
$\KK[\Ccur_i]$, where \\$\Ccur_i = \pi_i(\Ccur)$ is an algebraic curve;

 \item\label{hyp:projtang} for all $\xx \in  \reg(\Ccur)$, 
$\pi_2(T_{\xx}\Ccur)$ is a tangent line to $\Ccur_2$ at $\pi_2(\xx)$;

 \item\label{hyp:bij} the restriction of $\pi_3$ to $\Ccur$ is 
injective;

\item\label{hyp:app} if $\yy \in \app(\Ccur_2)$ then
 \begin{itemize}
 \item[${\Hspace['\,]{\arabic{enumi}}}$]\label{hyp:double} 
$\pi_2^{-1}(\yy)\cap\Ccur$ has cardinality 2;
 \item[${\Hspace['']{\arabic{enumi}}}$]\label{hyp:node} $\yy$ is a node of 
$\Ccur_2$;
 \end{itemize}
 
 \item\label{hyp:inj} $\KCcurde \cup \pi_2(\Pset)$ is finite and 
$\pi_1$ is injective on it;

 \item\label{hyp:nosingsecant} $\pi_2^{-1}(\pi_2(\xx))\cap \Ccur = \{\xx\}$,
 for all $\xx \in  \KCcur \cup \Pset$;

\item\label{hyp:odp} there is a \ODP $\Rparam=(\Omega,(x_1,x_2))$
  encoding $\Ccur$, with
  $\Omega=(\omega,x_1,x_2,\rho_3, \dotsc, \rho_n) \subset
  \KK[x_1,x_2]$.
\end{enumerate}
We omit $\Pset$ when the context is clear. Also, $\Hspace{}$ is satisfied
up to a generic linear change of coordinates over $\Ccur$ (see Section
\ref{sec:generic}).

\begin{theorem}\label{thm:mainresult}
 Let $\Cparam \subset \ZZo[x_1,x_2]$ be a \ODP encoding an algebraic 
curve $\Ccur \subset \CCo^n$ satisfying $\Hspace{}$ and $\Pparam \subset 
\ZZo[x_1]$ a \ZDP encoding a finite subset of $\Ccur$. 
Let $(\delta,\tau)$ and $(\deltaP,\tauP)$ the magnitudes of $\Cparam$ and 
$\Pparam$, respectively. 

There exists an algorithm which, on input $\Cparam$ and $\Pparam$, 
computes a partition of the points of $\Zparam(\Pparam)\cap\RRo^n$ lying in the 
same \SACC of $\Ccur\cap\RRo^n$, using 
\[
 \Otilde(\delta^6 + \deltaP^6 + \delta^5\tau + \deltaP^5\tauP)
\]
bit operations.
\end{theorem}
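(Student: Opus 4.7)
The plan is to answer the connectivity queries on $\CcurR$ by reducing them to queries on the plane projection $\CcurRde$, through the intermediate curve $\CcurRtre$. By $\Hspace{3}$, $\pi_3$ is injective on $\Ccur$, hence its restriction to $\CcurR$ is a semi-algebraic homeomorphism onto $\CcurRtre$; it therefore suffices to partition the images $\pi_3(\Zparam(\Pparam)\cap\RR^n)$ according to the \SACCs of $\CcurRtre$. The reduction to the plane is more subtle, because $\pi_2$ restricted to $\Ccur$ is only generically injective and introduces the apparent singularities collected in $\app(\Ccur_2)$.

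First I would compute an isotopic graph $G_2$ of $\CcurRde$. Hypothesis $\Hspace{7}$ supplies a defining polynomial $\omega(x_1,x_2)$ for $\Ccur_2$ of magnitude bounded by $(\delta,\tau)$, and $\Hspace{5}$ guarantees that $\pi_1$ separates the singular locus $\KCcurRde$. Feeding $\omega$ to the state-of-the-art plane-curve topology algorithm of \cite{KS2015,DDRRS2022} produces $G_2$, together with isolating data for $\KCcurRde$, in $\Otilde(\delta^6+\delta^5\tau)$ bit operations.

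Second I would lift $G_2$ to a graph $G_3$ homeomorphic to $\CcurRtre$. Over each regular arc of $\CcurRde$ the lift is unambiguous. Above each critical or query point of $\CcurRde$ having a single preimage on $\Ccur$ (guaranteed by $\Hspace{6}$), that preimage is recovered by evaluating the third component $\rho_3$ of the parametrization $\Rparam$. The delicate case is an apparent singularity $\yy\in\app(\Ccur_2)$: by $\Hspace{4}$, $\yy$ is a node with exactly two preimages on $\Ccur$. Following \cite{Ka2008,NP2011}, to answer connectivity queries it is unnecessary to resolve the local crossing topology of $\CcurRde$ at $\yy$; one only needs the pairing between the four incident half-branches of $G_2$ at $\yy$ induced by the lift to $\CcurRtre$. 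This pairing is determined by the two $x_3$-values at the preimages of $\yy$, read off from $\Rparam$ without any subdivision or numerical certification near $\yy$, and is the source of the claimed complexity gain over full 3D topology algorithms.

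Finally, I would locate each element of $\pi_3(\Zparam(\Pparam)\cap\RR^n)$ on a vertex or edge of $G_3$ using the $x_1$-coordinates furnished by $\Pparam$, and run a union-find traversal on $G_3$ to output the required partition; the contribution of the query points is $\Otilde(\deltaP^6+\deltaP^5\tauP)$. The hardest step will be controlling the bit-complexity of the node-pairing phase: one must bound the degrees and heights of the auxiliary univariate polynomials encoding the two $x_3$-values at each apparent singularity, so that this phase does not exceed $\Otilde(\delta^6+\delta^5\tau)$. A careful subresultant analysis applied to $\Rparam$, together with re-use of the factorization data already produced during the 2D topology step, should keep the whole procedure within the stated bound.
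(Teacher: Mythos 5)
Your overall strategy (project to the plane, compute the 2D topology with \cite{KS2015,DDRRS2022}, treat apparent singularities specially, and avoid a full 3D topology computation) is the same as the paper's, but the step you rely on at the apparent nodes has a genuine gap. You claim the pairing of the four half-branches at $\yy\in\app(\Ccur_2)$ is ``determined by the two $x_3$-values at the preimages of $\yy$, read off from $\Rparam$ without any subdivision or numerical certification.'' This does not work as stated: an apparent singularity is a singular point of $\Ccur_2=\V(\omega)$, so $\dy{\omega}$ vanishes there and the parametrization $\rho_3/\dy{\omega}$ is of the form $0/0$; recovering the two $x_3$-values requires a limit/power-series analysis along the two local branches (this is exactly the content of Lemma~\ref{lem:powersol} and Proposition~\ref{prop:sing-node}, where $z_i(\alpha)$ is expressed through the branch slopes $y_i'(\alpha)$). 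Moreover, even if you had the two values, they alone do not pair the four half-edges of $G_2$: you would have to certify which half-branch converges to which value, which is precisely the lifting/certification cost you claim to avoid, and which you yourself defer as ``the hardest step'' without resolving it — so the stated bit bound is not established by your argument. The paper sidesteps this entirely: by Lemmas~\ref{lem:arrangebranch} and \ref{lem:vertexapp}, since $\yy$ is an ordinary double point coming from two transversal smooth space branches, the correct pairing is simply ``opposite edges pair together,'' a purely combinatorial surgery on the planar graph (\NodeRes), requiring no evaluation of $\rho_3$ at all; correctness of connectivity is then Proposition~\ref{prop:algonode}.

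A second, related omission: you never give an effective procedure to decide \emph{which} singular points of $\Ccur_2$ are apparent (as opposed to projections of genuine singular points of $\Ccur$, where no edge surgery may be performed). Hypothesis \ref{hyp:app} constrains what apparent singularities look like but does not identify them; the paper needs the algebraic criterion $(\dyy{\omega}\cdot\dx{\rho_3}-\dxy{\omega}\cdot\dy{\rho_3})(\yy)\neq 0$ of Proposition~\ref{prop:sing-node}, implemented in \AppSing via resultant, subresultant and gcd computations with controlled magnitudes, to fit the detection inside $\Otilde(\delta^6+\delta^5\tau)$ (Proposition~\ref{prop:procappsing}). Without such a criterion, or a worked-out substitute with degree and height bounds, your complexity claim for the node phase — and hence for the whole algorithm — remains unproven. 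Your use of $\pi_3$-injectivity and the handling of query points via the 2D computation are consistent with the paper and unproblematic.
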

This is to be compared with the best complexity 
$\Otilde(\delta^{19}(\delta+\tau))$ known to analyze the topology of space 
curve.
\review{Note that the dependency on $n$ in the complexity bound is ``hidden'' 
within the potential degrees of the parametrizations and the corresponding 
algebraic sets.
Indeed, according to B\'ezout's bound, an algebraic set, defined by 
polynomials, of degree at most $D$, can have degree at most $D^n$.} 

\paragraph*{Structure of the paper}
After some preliminary results we prove that up to a generic change coordinate,
assumption $\Hspace{}$ holds. Then, under these assumptions, we describe two 
steps
of our algorithm that is identifying the finitely many points of the curve where
there is connectivity ambiguity and resolving these ambiguities. Finally, we
describe the main algorithm together with complexity bounds.

\section{Curves in generic position}
\label{sec:generic}
We now prove that $\Hspace{}$ holds for an algebraic curve in 
generic position $\Ccur$ that is, there is an open dense subset $\Agen$ of 
$\GL_n(\KKbar)$ such that for any $A\in\Agen$ the \adrien{sheared} curve 
$\Ccur^{A}$ 
satisfies 
$\Hspace{}$. 

\subsection{Generic projections of affine curves}
The results below are well known in the case of smooth projective curves (see
e.g. \cite[IV. Thm 3.10]{Ha1977} or \cite[\S7B.]{Mu1995} for $\KKbar=\CCo$),
and have been generalized subsequently in e.g. \cite{HR1979,KKT2008}. A 
version for complex singular affine space curves is proved in \cite[Prop 
5.2]{FGT2009} under regularity assumptions.
 We present here a generalization of \cite[Prop
5.2]{FGT2009} for any singular (affine) algebraic curve, following the proof 
and using more general objects and results from the literature.

Let $n \geq 3$, $\Ccur\subset\KKbar^n$ an affine
algebraic curve and $\Pset \subset \Ccur$ a finite subset. Denote $\PPn$
the projective space $\PPn(\KKbar)$, of dimension $n$ over $\KKbar$, and write
$[\xx_0\colon\cdots\colon\xx_n]$ its elements. Let 
$\Hinf=\{[\xx_0\colon\cdots\colon\xx_n]\in
\PPn \mid \xx_0=0\}$ be the hyperplane at infinity with respect to the affine 
open chart given by $\PPn-\Hinf$ (see e.g. \cite[I.2]{Ha1977})
We finally let $\PCcur$ be the projective closure of $\Ccur$ in $\PPn$.

We denote by $\Grass(1,n)=G(2,n+1)$ the Grassmanian of lines in $\PPn$, and,
for $\xx\neq\yy$ in $\PPn$, by $\Line{\xx}{\yy} \in
\Grass(1,n)$ the line containing $\xx$ and $\yy$.
For distinct points $\xx,\yy$ of $\PCcur$, the line $s=\Line{\xx}{\yy}$ will
be called the \emph{secant line of $\PCcur$ determined by $\xx$ and $\yy$}.
When $s$ intersects $\PCcur$ in a third point, distinct from $\xx$ and $\yy$,
we call it a \emph{trisecant line of $\PCcur$}.
If there are distinct $\xx',\yy' \in s \cap \reg(\PCcur)$ such that
$T_{\xx'}\PCcur$ and $T_{\yy'}\PCcur$ are coplanar, then it will be called a 
\emph{secant line with coplanar tangents of $\PCcur$}.
Then, we define $\Sec(\PCcur), \Tri(\PCcur)$ and $\SecCotg(\PCcur)$ as the sets 
of points in $\PPn$  that lie on respectively a secant, trisecant and secant 
with coplanar tangents of $\PCcur$. 
Finally, we denote by $\Tang(\PCcur)$ the set of points in $\PPn$ that lie on 
the tangent line $T_{\xx}\PCcur$ for some $\xx \in \reg(\PCcur)$.

\begin{lemma}\label{lem:dimsec}
The sets $\Sec(\PCcur)$ and $\Tang(\PCcur)$ are algebraic sets of dimension 
$\leq 3$ and $\leq 2$, respectively.
If, in addition, $\PCcur$ is not a plane curve, then $\Tri(\PCcur)$ and 
$\SecCotg(\PCcur)$ are algebraic sets of dimension $\leq 2$.
Finally, none of these sets contains $\Hinf$.
\end{lemma}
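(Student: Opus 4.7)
The plan is to represent each of the four sets as the image, under projection to $\PPn$, of an appropriate incidence variety built from $\PCcur$, and then bound dimensions via standard fiber-dimension estimates.

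For $\Sec(\PCcur)$, I would work with
\[
I_S = \{(\xx,\yy,z) \in \PCcur \times \PCcur \times \PPn : \xx \neq \yy,\ z \in \Line{\xx}{\yy}\},
\]
which fibers over $\PCcur \times \PCcur$ with $1$-dimensional fibers (the secant lines themselves); hence $\dim I_S \leq 3$, and $\Sec(\PCcur)$, being the image of the projection $(\xx,\yy,z)\mapsto z$, has dimension $\leq 3$. The same construction applied to $I_T = \{(\xx,z): \xx\in\reg(\PCcur),\ z\in T_\xx \PCcur\}$ yields $\dim \Tang(\PCcur)\leq 2$.

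For $\Tri(\PCcur)$ under the hypothesis that $\PCcur$ is not a plane curve, the strategy is to show that
\[
V = \{(\xx_1,\xx_2,\xx_3)\in\PCcur^3 : \xx_i \text{ pairwise distinct and collinear}\}
\]
has dimension at most $1$; then $\Tri(\PCcur)$, being the image in $\PPn$ of a $1$-dimensional family of lines, has dimension at most $2$. The projection $V\to \PCcur\times\PCcur$ on the first two coordinates has finite fibers (no line is a component of $\PCcur$), so it suffices to show that its image, the locus of ``trisecant pairs'', is a \emph{proper} subvariety of $\PCcur\times\PCcur$. I would argue by contradiction: if almost every pair spanned a trisecant, then the projection $\pi_\xx : \PCcur \dashrightarrow \PPn[n-1]$ away from a generic $\xx\in\PCcur$ would be at least $2$-to-$1$ on its image, and an iterated degree-drop argument (applying the same observation to a generic point of the image curve) would force $\PCcur$ to lie in a proper linear subspace, ultimately in a plane, contradicting the hypothesis. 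The treatment of $\SecCotg(\PCcur)$ is parallel: if the locus of pairs with coplanar tangents were dense in $\reg(\PCcur)^2$, then all tangent lines of $\PCcur$ would lie in a common $2$-plane, forcing $\PCcur$ to be a plane curve; so the locus has dimension at most $1$ and $\dim \SecCotg(\PCcur)\leq 2$.

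The claim that $\Hinf$ is contained in none of the four sets rests on the finiteness of $\PCcur\cap\Hinf$, which holds because $\Ccur$ is affine. Those lines (secants, tangents, trisecants, coplanar-tangent secants) that are themselves contained in $\Hinf$ arise only from the finite set $\PCcur\cap\Hinf$ and hence form only finitely many lines. The remaining lines each meet $\Hinf$ in a single point, producing rational maps from the incidence varieties above to $\Hinf$; the image has dimension bounded by the dimension of the incidence, i.e.\ by $3$ for $\Sec(\PCcur)$ and by $2$ for the three other sets. In the regime where these bounds are strictly less than $\dim \Hinf = n-1$ the claim follows immediately; the main obstacle I anticipate is the borderline low-dimensional case (in particular $n=3$ for $\Tang,\Tri,\SecCotg$), where one has to use more carefully that $\Ccur$ is a genuine infinite affine curve to exclude the corresponding rational maps to $\Hinf$ from being dominant.
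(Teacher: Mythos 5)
Your bounds for $\Sec(\PCcur)$ and $\Tang(\PCcur)$ via incidence varieties are fine and essentially the paper's argument. The genuine gap is in your treatment of $\Tri(\PCcur)$ and $\SecCotg(\PCcur)$: the hypothesis only says that $\PCcur$ as a whole is not contained in a plane, but $\PCcur$ may be reducible with planar components, and then your key intermediate claims are simply false. Take for instance $\PCcur=D\cup L$ with $D$ a plane quartic and $L$ a line not contained in the plane of $D$: every secant of $D$ meets $\PCcur$ in further points, so the locus $V$ of collinear triples is $2$-dimensional and the trisecant lines form a $2$-dimensional family, contradicting your claim $\dim V\le 1$; likewise, for two coplanar components every pair of regular points has coplanar tangents. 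Your contradiction argument only forces the \emph{components involved} to be (co)planar, which does not contradict the hypothesis, so the properness of the pair locus in $\PCcur\times\PCcur$ cannot be salvaged. The final dimension bound still holds in these cases, but for a different reason — all the offending lines lie in one fixed plane, so the points they sweep out form a set of dimension $2$ — and your reduction "$\dim V\le 1$ hence $\dim\Tri(\PCcur)\le 2$" cannot see this. The paper's proof is organized around exactly this case split: it works with the irreducible surfaces $\Sigma_{i,j}\subset\Grass(1,n)$ attached to pairs of components, quotes the trisecant lemma of \cite{KKT2008} for $\Tri(\PCcur)$, and for $\SecCotg(\PCcur)$ treats non-coplanar pairs (exhibiting a secant without coplanar tangents by projecting from a generic tangent line, using Sard and the fiber-dimension theorem, so that $M_{i,j}\subsetneq\Sigma_{i,j}$ and its closure has dimension $\le 1$) separately from coplanar pairs (where all lines of $\Sigma_{i,j}$ lie in the common plane $S_{i,j}$).

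Two further points. Even for irreducible $\PCcur$, your sketches are not proofs: in the "iterated degree-drop" argument it is not clear that the trisecant property is inherited by the projected curve (this is the classical trisecant lemma, a nontrivial result that the paper deliberately imports from the literature rather than reproves), and the step "pairwise coplanar tangents imply all tangents lie in a common plane" overlooks the alternative that all tangents pass through a common point, which has to be excluded by a characteristic-zero (strange-curve) argument; the paper's explicit construction avoids this dichotomy altogether. Finally, for the $\Hinf$ statement you leave the low-dimensional borderline cases open yourself; note that bounding by the dimension of the \emph{family of lines} (at most $2$ for secants, at most $1$ for the other three) rather than of the incidence variety already improves your count by one and is what the subsequent proposition actually needs, but as written your proposal does not settle the small-$n$ cases it flags, so this part too remains incomplete.
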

\begin{proof}
Let $\PCcur_1,\dotsc,\PCcur_m$ the irreducible components of $\PCcur$,
$i,j \in \{1,\dotsc,m\}$, possibly equal, and $\Sigma_{i,j}
\subset \Grass(1,n)$ the Zariski closure of the image of $\PCcur_i \times
\PCcur_j - \{(\yy,\yy) \mid \yy \in \PCcur_i \cap \PCcur_j\}$ through the
map $(\yy,\zz) \mapsto \Line{\yy}{\zz}$. As the image of a
Cartesian product of two irreducible curves, $\Sigma_{i,j}$ is an irreducible
algebraic set. Such a secant being uniquely determined by fixing
two points in $\PCcur_i$ and $\PCcur_j$, $\Sigma_{i,j}$ has dimension
$\leq 2$ by \cite[Thm 1.25]{Sh2013}. Then, if $\Sigma = \bigcup_{i,j}
\Sigma_{i,j}$ is the secant variety
of $\PCcur$, it has dimension $\leq 2$ and contains the secant lines
in $\Grass(1,n)$. As elements of $\Grass(1,n)$ \adrien{are}
algebraic sets of dimension 1, $\Sec(\PCcur)$ has Zariski
closure of dimension $\leq 3$.

Consider now, the subset $\Gamma_i \subset \PPn \times \PCcur_i$, consisting 
of  points $(\uu,\yy)$ such that $\yy \in \reg(\PCcur)$ and $\uu \in 
T_{\yy}\PCcur$,
and consider 
the projections $\phi_i\colon\Gamma_i \to \PPn$ and 
$\psi_i \colon\Gamma_i \to \PCcur_i$. For all $\yy$ in the Zariski open subset 
$\reg(\PCcur)\cap \PCcur_i$ of $\Ccur_i$, $\psi_i^{-1}(\yy)$ is exactly 
$T_{\yy}\PCcur$, which has dimension 1.
Hence, by \cite[Thm 1.25]{Sh2013}, $\phi_i(\Gamma_i)$ has Zariski closure of 
dimension $\leq 2$.
Since $\Tang(\PCcur) = \cup_i \phi_i(\Gamma_i)$, we are done.

Assume now, that $\PCcur$ is not a plane curve then, by \cite[Thm
2]{KKT2008}, the set of trisecant lines of $\PCcur$ is a subset of 
$\Grass(1,n)$ whose Zariski closure has dimension $\leq 1$. Then, as seen
above, $\Tri(\PCcur)$ has Zariski closure of dimension $\leq 2$.
Now, let $M_{i,j}$ be the subset of $\Sigma_{i,j}$ consisting of secant lines 
intersecting $\PCcur$ at points whose tangents are all contained in the same
plane. We are going to prove that the Zariski closure of $M_{i,j}$ has 
dimension $\leq 1$. Together with the dimension bound on $\Tri(\PCcur)$,
this will bound the dimension of $\SecCotg(\PCcur)$.
Suppose first that $\PCcur_i$ and $\PCcur_j$ are not coplanar components. Then, 
there is $\yy \in \PCcur_i - \sing(\PCcur)$ such that $l = T_{\yy}\PCcur$ 
and $\PCcur_j$ are not coplanar. If $\Pproj{l}\colon \PPn \to \PPn[n-2]$ 
denotes the projection of center $l$, then $\Pproj{l}(\PCcur_j)$ 
is not a point.
As $\PCcur_j$ is irreducible, and by \cite[Thm 1.25]{Sh2013}, the Zariski 
closure $\Rcal$ of $\Pproj{l}(\PCcur_j)$ is an irreducible algebraic 
subset of $\PPn[n-2]$ of dimension 1.
Hence, by \cite[Thm 1.25]{Sh2013} again, there is a finite set $K_1
\subset \PPn[n-2]$ such that for all $\ww \in \Rcal \backslash K_1$,
$\Pproj{l}^{-1}(\ww) \cap \PCcur_j$ is finite.
Besides, by Sard's Theorem \cite[Thm 2.27]{Sh2013}, there exists a
finite set $K_2 \subset \PPn[n-2]$ such that $\Rcal \backslash K_2$ does not
contain any critical value of the restriction of $\Pproj{l}$ to $\PCcur_j$.
Then,
for $\ww$ in $\Rcal \backslash [K_1 \cup K_2 \cup
\Pproj{l}(\sing(\PCcur))]$, 
\[
  \Pproj{l}^{-1}(\ww) \cap \PCcur_j = \{\zz_1,\dotsc,\zz_k\}
\]  
with $k\geq 1$, and for all $1\leq i\leq k$,
$\zz_i \in \reg(\PCcur)$ and $\Pproj{l}(T_{\zz_i}\PCcur)$ has dimension 1.
Hence, $\yy$ and $\zz_i$ have no coplanar tangents for all $1\leq i\leq k$.
In particular, the secant line $\Line{\yy}{\zz_1}$ contains two points having 
no coplanar tangents so that $\Line{\yy}{\zz_1} \in
\Sigma_{i,j} - M_{i,j}$ and $M_{i,j} \subsetneq \Sigma_{i,j}$.
In conclusion, the Zariski closure of $M_{i,j}$ is a proper algebraic subset, 
and since $\Sigma_{i,j}$ is irreducible, this closure has dimension $\leq 1$.
If $\PCcur_i$ and $\PCcur_j$ are coplanar, $\Sigma_{i,j}$ is the 
Zariski closure of $M_{i,j}$ and one of the following holds.
If $i=j$ and $\PCcur_i$ is a line, then $\Sigma_{i,j}$ is reduced to the line 
associated to $\PCcur_i$ and has dimension 0.
Else, there exists a unique plane $S_{i,j}$ containing
$\PCcur_i$ and $\PCcur_j$, so that any line of $\Sigma_{i,j}$ must be contained 
in 
$S_{i,j}$. In both cases, $\Sigma_{i,j}$, thus the closure of 
$M_{i,j}$, have dimension $\leq 1$.
Then, the Zariski closure of the union $M$ of all $M_{i,j}$ for $i,j \in 
\{1,\dotsc,m\}$, is an algebraic subset of $\Grass(1,n)$ of dimension
$\leq 1$ as requested.

Remark now that a secant with coplanar tangents is either a trisecant, or a
secant intersecting $\PCcur$ in exactly two regular points with coplanar
tangents. Hence, the set of secants with coplanar tangents of $\PCcur$ is
contained in the union of $M$ and the set of trisecant lines of $\PCcur$. By the
previous discussion, it has dimension $\leq 1$, so that the
Zariski closure of $\SecCotg(\PCcur)$ has dimension $\leq 2$.

Since $\PCcur - \Hinf$ can be identified with $\Ccur$, the former is a 
Zariski open subset of $\PCcur$, so that $\PCcur \cap \Hinf$ is 
finite.
In particular, $\Hinf$ contains finitely many secant or tangent lines of 
$\PCcur$ and then, cannot be contained in $\Sec(\PCcur)$ or $\Tang(\PCcur)$. 
Since $\Tri(\PCcur)$ and $\SecCotg(\PCcur)$ are contained in $\Sec(\PCcur)$, 
they cannot contain $\Hinf$ as well.
\end{proof}
\vspace*{-0.3em}
In the following, for $0 \leq r \leq n-1$, we denote by 
$\Grass(r,n-1)=G(r+1,n)$ the set of $r$-dimensional projective linear subspaces 
of $\Hinf$. Recall that using Pl\"ucker embedding (see e.g. \cite[Example 
1.24]{Sh2013}), $\Grass(r,n-1)$ can be embedded in $\PPn[\binom{n}{r+1}\,-1]$ 
as an irreducible algebraic set of dimension $(r+1)(n-r)$.
The next lemma is then a direct consequence of \cite[Thm 1.25]{Sh2013}.
\begin{lemma}\label{lem:grass}
 Let $X \subset \Hinf$ be an algebraic set of dimension $m \leq n-1$.
 Then, for any $i\geq m$ there exists a non-empty Zariski open subset 
$\Ggen_i$ of $\Grass(n-1-i,n-1)$ such that for every $E \in \Ggen_i$, the 
set $E \cap X$ is finite and, if $i>m$, it is empty.
\end{lemma}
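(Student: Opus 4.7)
The plan is to apply the standard incidence variety / fiber dimension technique, in the form given by \cite[Thm 1.25]{Sh2013}. Since $X$ decomposes into finitely many irreducible components, it suffices to establish the result for each irreducible component and then intersect the resulting Zariski open sets; so I may assume $X$ is irreducible of dimension $m$.

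First, I would form the incidence variety
\[
  I = \{(E,x) \in \Grass(n-1-i,n-1) \times X \mid x \in E\},
\]
together with its two projections $p_1\colon I \to \Grass(n-1-i,n-1)$ and $p_2\colon I \to X$. The fiber $p_2^{-1}(x)$ is the set of $(n-1-i)$-dimensional projective subspaces of $\Hinf \simeq \PPn[n-1]$ containing the fixed point $x$, which after quotienting by $x$ is isomorphic to $\Grass(n-2-i,n-2)$ of dimension $(n-1-i)i$. Thus every fiber of $p_2$ is irreducible of the same dimension, which makes $I$ irreducible (a Grassmannian bundle over $X$) and by \cite[Thm 1.25]{Sh2013} gives
\[
  \dim I \;=\; \dim X + (n-1-i)i \;=\; m + (n-1-i)i.
\]

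Now I would compare with $\dim \Grass(n-1-i,n-1) = (n-i)i$, noting that
\[
  \dim\Grass(n-1-i,n-1) - \dim I \;=\; i - m.
\]
In the case $i > m$, the dimension of $p_1(I)$ is at most $\dim I < \dim\Grass(n-1-i,n-1)$, so the Zariski closure $\overline{p_1(I)}$ is a proper closed subset and its complement $\Ggen_i$ is a non-empty Zariski open subset of the Grassmannian; for any $E \in \Ggen_i$, the fiber $p_1^{-1}(E)$ is empty, i.e. $E \cap X = \emptyset$.
In the case $i = m$, either $p_1$ is not dominant (and we conclude as above with empty intersections on $\Ggen_i$), or $p_1$ is dominant; in the latter case \cite[Thm 1.25]{Sh2013} yields a non-empty Zariski open subset $\Ggen_i \subset \Grass(n-1-i,n-1)$ such that $p_1^{-1}(E)$ has dimension $\dim I - \dim \Grass(n-1-i,n-1) = m - i = 0$, i.e. $p_1^{-1}(E)$ is finite, which is equivalent to $E \cap X$ being finite.

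The only genuine subtlety is handling reducibility of $X$: for each irreducible component $X_j$ of dimension $m_j \leq m$, the same argument produces a non-empty open $\Ggen_i^{(j)}$, and the intersection of finitely many such sets remains a non-empty Zariski open subset of the irreducible Grassmannian; this is the set $\Ggen_i$ we take. No other step is delicate, since the fiber-dimension computations reduce to standard facts about Grassmannian bundles over a base.
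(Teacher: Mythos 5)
Your argument is correct and is essentially the paper's own route: the paper dispatches this lemma as a direct consequence of \cite[Thm 1.25]{Sh2013} after recalling that $\Grass(n-1-i,n-1)$ is irreducible of dimension $(n-i)i$ via the Pl\"ucker embedding, and your incidence-variety computation (fibers of $p_2$ isomorphic to $\Grass(n-2-i,n-2)$, hence $\dim I = m+(n-1-i)i$, then comparing with $(n-i)i$ and splitting the cases $i>m$ and $i=m$) is precisely the standard way of making that deduction explicit. The reduction to irreducible components and the dimension counts are all sound, so nothing is missing.
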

Recall that $\Pset$ is a finite set of control points in $\PCcur-\sing(\PCcur)$.
\begin{proposition}\label{prop:projproj}
 If $\PCcur$ is not a plane curve, then for all $1\leq i \leq n-1$, 
there exists a non-empty Zariski open subset $\Ggen_i$ of $\Grass(n-1-i,n-1)$ 
such that for all $E \in 
\Ggen_i$, the following holds. Let $\Pproj{E}:\PCcur \to \PP^i$ be the 
projection with center $E$, then $\Pproj{E}$ is a finite regular map and
 \begin{enumerate}[label=$(\roman*)$]
  \item\label{ite:tangP} for all $\xx \in \Pset$, $\Pproj{E}(T_{\xx}\PCcur)$ is 
a projective line of $\PPn[i]$.
\end{enumerate}
If, in addition, $i\geq2$ then, 
 \begin{enumerate}[label=$(\roman*)$]
 \setcounter{enumi}{1}
 \item\label{ite:tang} item \ref{ite:tangP} holds for any $\xx \in 
\reg{\PCcur}$;
 \item\label{ite:trisec} for any $\xx \in \PCcur$, there exists at most 
one point $\xx' \in \PCcur$, distinct from $\xx$, such that $\Pproj{E}(\xx) 
= \Pproj{E}(\xx')$;
\item\label{ite:sectang} there exists finitely many such couples 
$(\xx,\xx')$, all satisfying $\xx,\xx' \in \reg(\PCcur)-\Pset$ and 
$\Pproj{E}(T_{\xx}\PCcur)\neq \Pproj{E}(T_{\xx'}\PCcur)$;
\item\label{ite:nosec} if $i\geq 3$, there is no such couple.
 \end{enumerate}
\end{proposition}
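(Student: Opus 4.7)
The plan is to translate each required property of $\Pproj{E}$ into the condition that the center $E$ avoid a specific algebraic subset of $\Hinf$ of controlled dimension, and then to apply Lemma~\ref{lem:grass} to each of these subsets to obtain a non-empty Zariski open subset of $\Grass(n-1-i,n-1)$. The sought $\Ggen_i$ is then the intersection of these finitely many Zariski opens, which remains non-empty (and dense) thanks to the irreducibility of $\Grass(n-1-i,n-1)$.

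The key geometric observation is that, for any projective line $l\not\subset\Hinf$, the intersection $l\cap\Hinf$ reduces to a single point, the direction of $l$; in particular, $\Pproj{E}$ sends $l$ injectively onto a projective line of $\PPn[i]$ exactly when this direction does not lie in $E$. First, I would secure that $\Pproj{E}$ is a well-defined finite regular morphism on $\PCcur$ by requiring $E\cap\PCcur=\emptyset$, which is possible since $\PCcur\cap\Hinf$ is finite. Then, combining Lemma~\ref{lem:dimsec} with the fact that $\Hinf$ is contained in none of the varieties considered therein, I would deduce that $\Sec(\PCcur)\cap\Hinf$ has dimension $\leq 2$ in $\Hinf$, while $\Tang(\PCcur)\cap\Hinf$, $\Tri(\PCcur)\cap\Hinf$ and $\SecCotg(\PCcur)\cap\Hinf$ all have dimension $\leq 1$.

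I would then translate each statement into an avoidance requirement. For \ref{ite:tangP}, $E$ must avoid the finite set $\{T_{\xx}\PCcur\cap\Hinf : \xx\in\Pset\}$; Lemma~\ref{lem:grass} provides this for any $i\geq 1$. When $i\geq 2$, I would further require $E$ to avoid $\Tang(\PCcur)\cap\Hinf$ for \ref{ite:tang}, $\Tri(\PCcur)\cap\Hinf$ for \ref{ite:trisec}, and both $\SecCotg(\PCcur)\cap\Hinf$ and the set of directions of secants incident to a point of $\sing(\PCcur)\cup\Pset$ for the qualitative assertions of \ref{ite:sectang}. The finiteness of couples in \ref{ite:sectang} then follows from applying Lemma~\ref{lem:grass} to $\Sec(\PCcur)\cap\Hinf$: for $i=2$, generic $E$ meets it in finitely many points, giving finitely many secants through $E$, hence finitely many couples. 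For $i\geq 3$, the same lemma gives $E\cap\Sec(\PCcur)\cap\Hinf=\emptyset$, which yields \ref{ite:nosec}.

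The main subtlety I anticipate lies in the dimension count for the auxiliary bad set appearing in \ref{ite:sectang}: for each $\yy\in\sing(\PCcur)\cup\Pset$, the secants from $\yy$ to a second point of $\PCcur$ form a $1$-parameter family, parameterized by $\PCcur-\{\yy\}$ via $\zz\mapsto\Line{\yy}{\zz}$, so their directions trace a subset of $\Hinf$ of dimension $\leq 1$; the union over the finite set $\sing(\PCcur)\cup\Pset$ preserves this bound, so Lemma~\ref{lem:grass} applies and finishes the argument.
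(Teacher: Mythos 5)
Your plan is correct and follows essentially the same route as the paper's proof: the same bad loci ($\Sec(\PCcur)$, $\Tang(\PCcur)$, $\Tri(\PCcur)$, $\SecCotg(\PCcur)$, the tangents at $\Pset$, and the secants through $\sing(\PCcur)\cup\Pset$) are intersected with $\Hinf$, given the same dimension bounds via Lemma~\ref{lem:dimsec}, and then avoided by repeated use of Lemma~\ref{lem:grass}, with $\Ggen_i$ obtained as the finite intersection of the resulting opens. The only loose point is your justification of the bounds on the intersections with $\Hinf$: they do not follow formally from ``$\Hinf$ is contained in none of these sets'' alone, but, as in the paper, from the fact that only finitely many of the secant/tangent lines composing these varieties lie in $\Hinf$, so each of these line families contributes essentially one point of $\Hinf$ per line --- with this reading, your argument matches the paper's.
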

\begin{proof}
 Fix $1\leq i \leq n-1$ and suppose that $\PCcur$ is not plane.
 As a proper Zariski closed set of 
 $\PCcur$, $X_1 := \Hinf \cap \PCcur$ is finite.
 By Lemma~\ref{lem:grass}, as $i > 0$,
there is a non-empty Zariski open subset $\Ggen_1$ of $\Grass(n-1-i,n-1)$ 
such that for all $E \in \Ggen_1$, $E \cap X_1$ is empty.
Moreover, any $(n-i)$-dimensional space containing $E$ cannot contain an 
irreducible component of $\PCcur$ (it would be a line, 
intersecting $E$ at some point of $E \cap \PCcur = E\cap X_1$, which is empty).
Thus, the projection with center $E \in \Ggen_1$ induces a finite map on 
$\PCcur$, regular by definition.

According to Lemma~\ref{lem:dimsec}, the set of points lying on a tangent or a 
trisecant line of $\PCcur$ is an algebraic set of dimension $\leq 2$. Since 
$\Hinf$ contains finitely many such tangents or trisecants, 
$$X_2 = (\Tang(\PCcur) \cup \Tri(\PCcur)) \cap \Hinf$$ has dimension at most 
$1$. 
By Lemma~\ref{lem:grass}, as $i\geq1$, there exists a non-empty Zariski open 
subset $\Ggen_2$ of $\Grass(n-1-i,n-1)$ such that any $E \in \Ggen_2$ 
intersects finitely many points of $\Tang(\PCcur)\cup\Tri(\PCcur)$. 
Besides, there are finitely many tangents intersecting the finite set 
$\Pset$, so that by Lemma~\ref{lem:grass}, up to intersecting 
$\Ggen_2$ with a non-empty Zariski open subset of $\Grass(n-1-i,n-1)$, one can 
assume that none of these tangents intersect $\Pset$. This proves 
\ref{ite:tangP}.

Assume now $i\geq 2$. By Lemma~\ref{lem:grass}, no $E \in \Ggen_2$ 
intersects points in $\Tang(\PCcur)\cup\Tri(\PCcur)$. In  particular, any 
$(n-i)$-dimensional space containing $E$ cannot contain a tangent nor a 
trisecant, and, as seen above, this means that no tangent, or three 
distinct points, are mapped to one point.
This proves respectively \ref{ite:tang} and 
\ref{ite:trisec}.

Then, by Lemma~\ref{lem:dimsec}, the set $X_3=\Sec(\PCcur) \cap \Hinf$ 
of points
in $\Hinf$, lying on a secant line of $\PCcur$,
is algebraic of dimension
$\leq 2$. By Lemma~\ref{lem:grass} ($i\geq 2$), there is a
non-empty Zariski open subset $\Ggen_3$ of $\Grass(n-1-i,n-1)$ such that any $E
\in \Ggen_3$ contains finitely many points lying on a secant line of $\PCcur$ 
i.e., as before, there are finitely many
couples of points which are mapped to the same point in $\PPn[i]$. Besides,
the set of secants intersecting $\sing(\PCcur) \cup \Pset$ is a
proper algebraic subset of the secant variety of $\PCcur$. Hence, by
Lemma~\ref{lem:grass}, up to intersecting $\Ggen_3$ with a non-empty Zariski
open subset of $\Grass(n-1-i,n-1)$, one can assume that none of these secants
intersect $\sing(\PCcur) \cup \Pset$.
Finally, by Lemma~\ref{lem:grass}, as $\SecCotg(\PCcur) \cap
\Hinf$ has dimension $\leq 1$. As seen above, up to intersecting $\Ggen_3$ with 
a non-empty Zariski open subset of $\Grass(n-1-i,n-1)$, one can assume that 
these secants intersect $\PCcur$ at points with no coplanar tangents, which 
cannot be mapped to the same line. 
\remi{All in all, for any $E \in \Ggen_3$, \ref{ite:sectang} holds.}

By Lemma~\ref{lem:grass}, if moreover $i\geq 3$, no $E \in \Ggen_3$ 
intersects points in $\Sec(\PCcur)$ that is, no two distinct 
points are mapped to the same image. This proves \ref{ite:nosec}. 
Taking $\Ggen_i=\Ggen_1\cap\Ggen_2\cap\Ggen_3$ ends 
the proof. 
\end{proof}

We can now state the affine counterpart of Proposition~\ref{prop:projproj}.
\begin{corollary}\label{cor:projaff}
 There exists a non-empty Zariski open set $\Agen$ of $\GL_n(\KKbar)$ such that 
for 
all 
$A \in \Agen$ and $1 \leq i\leq n$, the following holds:
the restriction of $\pi_i$ to $\Ccur^A$ is a finite morphism, and
 \begin{enumerate}[label=$(\roman*)$]
  \item\label{ite:afftangP} for all $\xx \in \Pset^A$, $\pi_i(T_{\xx}\Ccur^A)$ 
is a line of $\KKbar^i$.
\end{enumerate}
If, in addition, $i\geq2$ then, 
 \begin{enumerate}[label=$(\roman*)$]
 \setcounter{enumi}{1}
  \item\label{ite:afftang} item \ref{ite:afftangP} holds for any $\xx \in 
\reg(\Ccur^{A})$;
 \item\label{ite:affnoninj} the restriction of $\pi_i$ to $\Ccur^A$ is not 
injective at $\xx$ if, and only if, $i=2$ and 
$\pi_2(\xx)\in\app(\Ccur_2^A)$;
 \item\label{ite:affnodes} $\app(\Ccur_2^A)$ contains only nodes, with 
exactly two preimages through $\pi_2$, none of them being in $\Pset^A$;
\end{enumerate}
\end{corollary}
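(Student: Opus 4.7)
The plan is to transfer each property of Proposition~\ref{prop:projproj} from the projective setting to the affine one, by viewing each linear change of coordinates $A\in\GL_n(\KKbar)$ as a projective automorphism of $\PPn$. Under this view, the restriction $\pi_i|_{\Ccur^A}$ is the affine trace of the linear projection of $\PCcur$ with center $E_i(A)\subset\Hinf$, where $E_i(A)$ denotes the projectivization in $\Hinf$ of the linear span of the last $n-i$ columns of $A$: a direct computation of $\pi_i\circ\Phi_A$ shows that the kernel of this composition is precisely this span. The case where $\PCcur$ is itself a plane curve forces $n\leq 3$ and makes the corollary immediate; otherwise, I would apply Proposition~\ref{prop:projproj} for each $1\leq i\leq n-1$ to obtain a non-empty Zariski open subset $\Ggen_i\subset\Grass(n-1-i,n-1)$ of projection centers realizing items~\ref{ite:tangP}--\ref{ite:nosec}.

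Next, I would pull back these $\Ggen_i$ to $\GL_n(\KKbar)$ by defining $\Psi_i\colon\GL_n(\KKbar)\to\Grass(n-1-i,n-1)$ by $A\mapsto E_i(A)$: its Pl\"ucker coordinates are the $(n-i)\times(n-i)$ minors of the $n\times(n-i)$ submatrix of $A$ formed by its last $n-i$ columns, so $\Psi_i$ is a regular map. Surjectivity follows from the fact that any $(n-1-i)$-dimensional projective subspace of $\Hinf$ is the projectivization of some $(n-i)$-dimensional linear subspace of $\KKbar^n$, which one can always realize as the span of the last $n-i$ columns of an invertible matrix by completing a basis. Thus $\Psi_i$ is dominant, and $\Psi_i^{-1}(\Ggen_i)$ is a non-empty Zariski open subset of $\GL_n(\KKbar)$. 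Intersecting further with the open condition $E_i(A)\cap\PCcur\cap\Hinf=\emptyset$, which ensures that each $\pi_i|_{\Ccur^A}$ is a finite morphism, and taking the intersection over all $1\leq i\leq n-1$, yields the desired non-empty Zariski open subset $\Agen$ of $\GL_n(\KKbar)$ (the case $i=n$ being trivial since $\pi_n$ is the identity).

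It then remains, for $A\in\Agen$ and $1\leq i\leq n$, to translate each projective conclusion at center $E_i(A)$ into its affine counterpart for $\pi_i|_{\Ccur^A}$. Since regular and control points of $\Ccur^A$ correspond, via $\Phi_A$, to regular and control points of $\Ccur$, and affine tangent lines are the affine traces of their projective closures, items~\ref{ite:tangP} and~\ref{ite:tang} directly yield items~\ref{ite:afftangP} and~\ref{ite:afftang}. For $i\geq 3$, item~\ref{ite:nosec} gives global injectivity of $\pi_i|_{\Ccur^A}$, settling item~\ref{ite:affnoninj} in that range. For $i=2$, items~\ref{ite:trisec} and~\ref{ite:sectang} imply that the non-injectivity locus of $\pi_2|_{\Ccur^A}$ consists of finitely many pairs $(\xx,\xx')$ of regular points in $\Ccur^A-\Pset^A$, whose common image $\pi_2(\xx)$ lies, by definition, in $\app(\Ccur_2^A)$; this proves item~\ref{ite:affnoninj} and the exact-two-preimages clause of~\ref{ite:affnodes}. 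Finally, item~\ref{ite:sectang} ensures that $\pi_2(T_{\xx}\Ccur^A)$ and $\pi_2(T_{\xx'}\Ccur^A)$ are distinct lines, so $\Ccur_2^A$ has two smooth transverse branches at the apparent singularity, i.e.\ an ordinary double point (a node), completing~\ref{ite:affnodes}. The main obstacle will be precisely this projective-to-affine bookkeeping: verifying that $\Psi_i$ is dominant, that infinite points of $\PCcur^A$ do not introduce spurious affine non-injectivity pairs (they project to points at infinity in $\PP^i$ themselves), and that the distinct-tangent condition really encodes the node structure in the affine chart.
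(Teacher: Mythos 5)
Your overall route coincides with the paper's: pull back the generic centers $\Ggen_i$ of Proposition~\ref{prop:projproj} to $\GL_n(\KKbar)$ via a regular surjection onto $\Grass(n-1-i,n-1)$, identify $\pi_i|_{\Ccur^A}$ with the affine trace of the projection with that center, and translate each projective item; your description of the center as the projectivized span of the last $n-i$ columns of $A$ is just the dual of the paper's row-wise description and is correct. The genuine gap is in items \ref{ite:affnoninj} and \ref{ite:affnodes}: you only prove one direction. Your argument shows that every non-injectivity pair $(\xx,\xx')$ consists of regular points outside $\Pset^A$ whose common image is a node lying in $\app(\Ccur_2^A)$, which gives the ``only if'' half of \ref{ite:affnoninj}. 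But \ref{ite:affnoninj} is an equivalence and \ref{ite:affnodes} is a statement about \emph{all} points of $\app(\Ccur_2^A)$, so you must also show that every $\yy\in\app(\Ccur_2^A)$ has at least two preimages in $\Ccur^A$. This is not ``by definition'': $\app(\Ccur_2^A)=\sing(\Ccur_2^A)-\pi_2(\sing(\Ccur^A))$, so a priori an apparent singularity could be the image of a single (necessarily regular, since $\yy\notin\pi_2(\sing(\Ccur^A))$) point at which $\pi_2$ is injective. The paper closes exactly this with item \ref{ite:tang} of Proposition~\ref{prop:projproj}: if $\xx$ were the unique preimage of $\yy$, then $\xx$ is regular and $\pi_2(T_{\xx}\Ccur^A)$ is a line, so (using finiteness of $\pi_2|_{\Ccur^A}$ and injectivity away from the finitely many secant pairs) $\Ccur_2^A$ would be smooth at $\yy$, contradicting $\yy\in\sing(\Ccur_2^A)$. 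You need to add this converse argument for the equivalence and for the ``exactly two preimages, all nodes'' claim to be established for the whole of $\app(\Ccur_2^A)$.

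A secondary slip: a plane curve in $\KKbar^n$ does not force $n\leq 3$ (it may lie in a $2$-plane of $\KKbar^n$ for any $n$), so your reduction of the plane case is wrongly justified even though the conclusion is fine: for such curves one argues directly (for generic $A$, the plane containing $\Ccur^A$ projects isomorphically under $\pi_i$ for $i\geq 2$, so there are no apparent singularities and tangents map to lines), which is what the paper summarizes as ``straightforward''; this case must be treated separately since Proposition~\ref{prop:projproj} excludes plane curves.
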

\begin{proof} 
If $\Ccur$ is a plane curve, it is straightforward.
Suppose from now on $n \geq 3$ and $\Ccur$ not plane.
If $i=n$, there is nothing to prove, so let $1\leq i \leq n-1$.
Let $\PCcur$ be the projective closure of $\Ccur$, which is not a plane 
either. 
Let $\Ggen_i$ be the non-empty Zariski open subset of $\Grass(n-1-i, n-1)$ 
given by Proposition~\ref{prop:projproj}.
According to Pl\"ucker embedding, there exists a surjective regular map  
from the set of $i$ linearly independent vectors $\aa_1,\dotsc,\aa_i$ of 
$\KKbar^n$ to the set of $(n-1-i)$-dimensional (projective) linear subspaces of 
$\Hinf$, defined by $x_0=0$ and  $\aa_{j,1}x_1 + \cdots +
\aa_{j,n}x_n=0$ for $1\leq j \leq i$.
Hence, there exists a non-empty Zariski open set $\Agen_i$ of $\GL_n(\KKbar)$ 
of matrices $A$ such that the first $i$ rows of $A^{-1}$ are mapped to some $E 
\in \Ggen_i$, through the above map.
Moreover, for any $A \in \Agen_i$ the following holds. Consider, 
\[
 \tilde{A}= 
\begin{bmatrix}
  1 & \OO\\
  \OO& A
 \end{bmatrix}
,
\]
and for $1\leq j\leq n$, let $\aa_j=(\aa_{j,1},\dotsc,\aa_{j,n})$ be the rows of
$A$. If $L_0 = x_0$ and for $1\leq j\leq i$, $L_j=\aa_{j,1}x_1 + \cdots +
\aa_{j,n}x_n$, then the equations $L_0,\dotsc,L_i$ define a projective linear
subspace $E$ of $\Hinf$, such that $E \in \Ggen_i$ and, by definition
(see e.g. \cite[Example 1.27]{Sh2013}), \vspace*{-0.3em}
\[
 \begin{array}{cccc}
  \Pproj{E}: &\PCcur^{\tilde{A}} &\to &\PPn[i]\\
  &\xx &\mapsto &[\xx_0:\cdots:\xx_i]
 \end{array}.
\]
Therefore, the restriction of $\Pproj{E}$ to the affine chart $\PPn - \Hinf$ 
can be identified with the restriction of $\pi_i$ to $\Ccur^{A}$.
According to Proposition~\ref{prop:projproj}, the restriction of $\pi_i$ to 
$\Ccur^{A}$ is a finite morphism satisfying item \ref{ite:afftangP}.
Assume now that $i\geq2$ then, assertion \ref{ite:afftang} is a direct 
consequence of item \ref{ite:tang} of Proposition~\ref{prop:projproj}.

Besides, let $\xx\in\Ccur^A$ such that there is $\xx'\in\Ccur^A$ satisfying 
$\xx'\neq\xx$ and $\pi_i(\xx)=\pi_i(\xx')$. 
Then, by 
Proposition~\ref{prop:projproj},  \ref{ite:trisec} to \ref{ite:nosec}, $\xx'$ 
is unique, both 
$\xx,\xx'\notin\sing(\Ccur^A)\cup\Pset^A$, and necessarily $i=2$.
Moreover, $T_{\xx}\Ccur^A$ and $T_{\xx'}\Ccur^A$ map to distinct lines of 
$\KKbar^2$, crossing at $\pi_2(\xx)$: it is a node. 
Hence, $\xx \in \app(\Ccur_2^A)$ and $\pi_2(\xx)$ is a node, with exactly two 
preimages, none of them being in $\Pset^A$.
Conversely from 
Proposition~\ref{prop:projproj}, \ref{ite:tang}, all points of 
$\app(\Ccur_2^A)$ have at least 
two preimages in $\Ccur^A$. This proves \ref{ite:affnoninj} and 
\ref{ite:affnodes}. Taking $\Agen = \bigcap_{i=1}^{n-1}\Agen_i$ concludes.
\end{proof}

\subsection{Recovering $\Hspace{}$}

\begin{proposition}
 Let $\Ccur \subset \KKbar^n$ be an algebraic curve and a finite subset $\Pset 
\subset \reg(\Ccur)$.
 There exists a non-empty Zariski open set $\Agen \subset \GL_n(\KKbar)$ such
 that, for any $A \in \Agen$,\: $(\Ccur^A,\Pset^A)$ satisfies $\Hspace{}$.
\end{proposition}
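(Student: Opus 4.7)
The plan is to obtain $\Agen$ as the intersection of finitely many non-empty Zariski open subsets of $\GL_n(\KKbar)$, each guaranteeing one (or more) of the seven conditions of $\Hspace{}$. First, I would invoke Corollary~\ref{cor:projaff} to produce a non-empty Zariski open $\Agen_1 \subset \GL_n(\KKbar)$ such that, for every $A \in \Agen_1$ and every $1 \leq i \leq n$, the restriction $\pi_i|_{\Ccur^A}$ is a finite morphism whose image is an algebraic curve, and items $(i)$--$(iv)$ of the corollary hold. Since finite morphisms induce integral ring extensions and preserve dimension, $\Hspace{1}$ follows immediately. Item $(ii)$ with $i=2$ yields $\Hspace{2}$; item $(iii)$ with $i=3$ forces $\pi_3|_{\Ccur^A}$ to be injective, giving $\Hspace{3}$; and item $(iv)$ with $i=2$ gives $\Hspace{4}$.

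For $\Hspace{7}$, I would exploit that for $A \in \Agen_1$ the projection $\pi_2|_{\Ccur^A}$ is finite and generically one-to-one, since by $\Hspace{4}$ its non-injective locus is contained in the finite set $\pi_2^{-1}(\app(\Ccur_2^A))$. Consequently the inclusion $\KK(\Ccur_2^A) \hookrightarrow \KK(\Ccur^A)$ of function fields is an equality, so each coordinate $x_j$ for $3 \leq j \leq n$ is expressible as a rational function in $x_1, x_2$ on $\Ccur^A$. After reducing modulo the defining polynomial $\omega \in \KK[x_1,x_2]$ of $\Ccur_2^A$, and clearing denominators by a suitable power of $\dv{\omega}$, one obtains a \ODP of the required form; reducedness of $\omega$ and monicity in $x_2$ are generic conditions, carving out a further non-empty Zariski open $\Agen_2$.

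The most delicate conditions are $\Hspace{5}$ and $\Hspace{6}$. Both require injectivity of $\pi_1$ or $\pi_2$ on a finite, $A$-dependent subset of points, namely $\K(\pi_1,\Ccur_2^A) \cup \pi_2(\Pset^A)$ for $\Hspace{5}$, and the $\pi_2$-preimages of $\pi_2(\K(\pi_1,\Ccur^A) \cup \Pset^A)$ for $\Hspace{6}$. In each case, the failure of injectivity is cut out by the vanishing of a polynomial in the entries of $A$, so the corresponding bad locus is Zariski closed. The main obstacle is precisely to show these polynomials are not identically zero; I would do so by exhibiting, for any fixed $A_1 \in \Agen_1 \cap \Agen_2$, a lower-triangular shear $S$ acting only on $x_1$ (resp. on the pair $(x_1,x_2)$), for which $SA_1$ separates the finitely many relevant points while preserving the projection properties already established, since those were expressed purely in terms of projections to subsequences of coordinates. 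This yields non-empty Zariski open subsets $\Agen_3$ and $\Agen_4$ corresponding to $\Hspace{5}$ and $\Hspace{6}$, and setting $\Agen = \Agen_1 \cap \Agen_2 \cap \Agen_3 \cap \Agen_4$ concludes.
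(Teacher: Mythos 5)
Your overall architecture --- intersecting finitely many \NEZO subsets of $\GL_n(\KKbar)$ and extracting $\Hspace{1}$--$\Hspace{4}$ from Corollary~\ref{cor:projaff} --- coincides with the paper's, and that part is sound. The genuine gaps concern $\Hspace{5}$, $\Hspace{6}$ and $\Hspace{7}$. For $\Hspace{6}$ the mechanism you propose cannot work as described: a shear acting only on the pair $(x_1,x_2)$ commutes with $\pi_2$ up to a linear automorphism of the plane, hence maps $\pi_2$-fibers to $\pi_2$-fibers and can never separate two points of the curve lying over the same plane point; if instead you shear $x_3,\dotsc,x_n$ into $x_1$ or $x_2$, then $\Ccur_2^A$, $\app(\Ccur_2^A)$ and $\K(\pi_1,\Ccur_2^A)$ all change, so ``preserving the projection properties already established'' is precisely what would have to be proved. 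The paper shows that no additional genericity is needed for $\Hspace{6}$: for $A$ in the set realizing $\Hspace{5}$, a point of $\Wo(\pi_1,\Ccur^A)\cup\Pset^A$ lying over an apparent singularity would, by Corollary~\ref{cor:projaff} and $\Hspace{2}$, force a vertical tangent line at a node of $\Ccur_2^A$, contradicting the generic position of the plane curve; this geometric step is absent from your argument. The treatment of $\Hspace{5}$ has the same defect: the finite set $\K(\pi_1,\Ccur_2^A)\cup\pi_2(\Pset^A)$ is not carried covariantly by a further shear (the critical points of $\pi_1$ on the sheared plane curve are new points, not the sheared old ones), so ``exhibit one shear separating the relevant points'' does not establish the nonvanishing of your polynomial, and the asserted Zariski-closedness of the bad locus is never justified. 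The paper resolves both points at once by invoking the plane-curve generic position results of \cite[\S 4.2]{BK2002} and \cite[\S 3.2]{GK1996}, which furnish the nonzero polynomial $G(\Avar,t)$, the injectivity of $\pi_1$ on the relevant finite set, and the no-vertical-tangent-at-singular-points property that the $\Hspace{6}$ argument then exploits.

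For $\Hspace{7}$, your reduction to birationality of $\pi_2$ restricted to $\Ccur^A$ (via generic injectivity from Corollary~\ref{cor:projaff}) is a reasonable start, but ``clearing denominators by a suitable power of $\dy{\omega}$'' does not yield a \ODP: the data structure requires the common denominator to be the first power $\dy{\omega}$, together with the bounds $\deg \rho_j<\deg\omega$, and proving that this Kronecker-type shape exists is the nontrivial content of the paper's citation \cite[Cor 3.4 \& 3.5]{DL2008}. Moreover that result produces the parametrization for a generic separating form $x_2+\dd_3x_3+\cdots+\dd_nx_n$, which is why the paper composes with the extra shear $C$ to reduce to the pair $(x_1,x_2)$; as written, your step asserts the required conclusion rather than proving it.
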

\begin{proof}
  Let $\Agen_1 \subset \GL_n(\KKbar)$ be the non-empty Zariski open subset
  defined in Corollary~\ref{cor:projaff}
  and let $A\in \Agen_1$. For all $1\leq i \leq n$, the restriction of $\pi_i$
  to $\Ccur^{A}$ is a finite morphism, so that $\Ccur^{A}_i =
  \pi_i(\Ccur^{A})$ is an algebraic curve. Since $\KKbar$
  is integral over $\KK$, the extension $\KK[\Ccur^{A}_i] {\hookrightarrow}
  \KK[\Ccur^{A}]$ is integral as well: \ref{hyp:noeth} is satisfied.
  Applying Corollary~\ref{cor:projaff}, for $i=3$ and $i=2$ shows that
  the curve $\Ccur^{A}$ satisfies respectively \ref{hyp:bij} on the
  one hand and \ref{hyp:projtang} and \ref{hyp:app} on the other.

Let $\Avar=(\avar_{i,j})_{1\leq i, j\leq n}$ and $t$ be new indeterminates, 
the former ones standing for the entries of a square matrix of size $n\times n$.
Since $\Agen_1$ is non-empty and Zariski open, there exists a non-zero 
polynomial $F \in \KKbar[\Avar]$, such that $A \in \Agen_1$ if $F(A)\neq 0$.
Besides, according to \cite[\S 4.2]{BK2002} (or \cite[\S 3.2]{GK1996}), there 
exists a non-zero polynomial $G \in \KKbar[\Avar,t]$ such that, if $F(A)\neq 0$ 
and $G(A,b)\neq 0$ then, for
\[
 B= {\small\arraycolsep=0.5\arraycolsep\begin{bmatrix}
     1 & b & \OO\\
     0 & 1 & \OO\\
     \OO & \OO & \scalebox{0.9}{$I_{n-2}$}\\
    \end{bmatrix}},
\]
the curve $\Ccur_2^{BA}$ is a plane curve in generic position in the sense of 
\cite[\S 4.2]{BK2002} and \cite[Def 3.3]{Ka2008}.
In particular, $\pi_1$ maps no tangent line of any singular point of $\Ccur_2$ 
to a point and its restriction of $\pi_1$ to 
the finite set
$\WoCcurde[BA]$ is injective. Let $\Pset_2=\pi_2(\Pset)$. As 
$\Pset_2\cup\sing(\Ccur_2)$ is finite, we can assume that $\pi_1$ is injective 
on $\Pset_2^{BA} \cup \sing(\Ccur_2^{BA})$ as well.
But, for any $\xx\in\WoCcurde[BA]$, $\pi_1(\xx)$ is a point, so that $\xx$ is 
neither in $\sing(\Ccur_2^{BA})$ nor $\Pset_2^{BA}$, by genericity of 
$\Ccur_2^{BA}$ and item \ref{ite:afftangP} of Corollary~\ref{cor:projaff} 
respectively.
Then, let $b \in \KKbar$ such that $G(\Avar,b)$ is not zero and let 
$B$ be as above. The subset $\Agen_2 \subset \GL_n(\KKbar)$ of elements of the 
form $BA'$ where $F(A')G(A',b)\neq 0$ is a \NEZO subset. Moreover, for any 
$A\in\Agen_2$, $\Ccur^A$ satisfies \ref{hyp:inj}.

Take $A \in \Agen_1\cap\Agen_2$ and let $\xx \in \KCcur[A]\cup\Pset^A$ and
$\yy=\pi_2(\xx)$. Suppose there is $\xx' \in \Ccur^A$ such that $\xx'\neq\xx$
and $\pi_2(\xx')=\yy$. By \ref{ite:affnoninj}, $\xx \in \WoCcur[A]$
and $\yy$ is a node in $\app(\Ccur_2^A)$, with vertical tangent line
$\pi_2(T_{\xx}\Ccur^A)$: this is impossible by above ($A
\in \Agen_2$, so that  $\Ccur_2^A$ is in generic position). Therefore, 
$\Ccur^A$ satisfies \ref{hyp:nosingsecant}.

We proceed similarly for \ref{hyp:odp}. Let $A \in \Agen_1$. By
\ref{hyp:noeth}, $\Ccur^{A}$ is in Noether position (for $\pi_1$). Let
$\Dvar=(\mathfrak{d}_3,\dotsc,\mathfrak{d}_n)$ be new variables. By
\cite[Cor 3.4 \& 3.5]{DL2008}, there is $H \in
\KKbar[\Avar,\Dvar]$ non-zero such that, if $F(A)\neq 0$ and $H(A,\dd)\neq 0$, 
then 
the following holds: if $\mu_{\dd} = x_2 + \dd_3x_3 + \cdots + \dd_n x_n$
is a linear form, then there is $\Rparam =(\omega,\rho_1,\dotsc,\rho_n)
\subset \KK[x_1,v]$ such that $(\Rparam,x_1,\mu_{\dd})$ is a \ODP encoding
$\Ccur^{A}$. Let $\dd \in \KKbar^{n-1}$ such that $H(\Avar,\dd)$ is not
zero and
\[
   C = {\small\arraycolsep=0.5\arraycolsep\begin{bmatrix}
     1 & \OO & \OO\\
     0 & 1 & \dd\\
     \OO & \OO  & \scalebox{0.9}{$I_{n-2}$}\\
    \end{bmatrix}}.
\]
The subset $\Agen_3 \subset \GL_n(\KKbar)$ of elements, of the form $CA'$, 
where $F(A')$ and $H(A',\cc)$ are both not zero, is a \NEZO subset where 
$\Ccur^{A}$ satisfies \ref{hyp:odp}.

Finally, 
for $A\in \Agen:=\Agen_1 \cap \Agen_2
\cap \Agen_3$, $\Ccur^{A}$
satisfies $\Hspace{}$.
\end{proof}

\section{Detect apparent singularities}
\label{sec:apparent}
We generalize the criterion of \cite{Ka2008} used to identify apparent
singularities in plane projection of space curve. 
We keep notations given in Section \ref{sec:intro}, and \textbf{assume for the 
rest of the document that $(\Ccur,\Pset)$ satisfies $\Hspace{}$}.
We start by an adapted version of \cite[Lemma 4.1]{Ka2008} (the
equivalence relation modulo $\I(\Ccur)$ is denoted $\equiv$).
\begin{lemma}\label{lem:powersol}
  Let $(\alpha,\beta)$ be a node of $\Ccur_2$.
  There are exactly two power-series $y_1,y_2 \in \Kbarpow$
  such that for $i=1,2$, if $z_i= 
  \frac{\rho_3(x_1,y_i)}{\dy{\omega}(x_1,y_i)}$ then:
  \begin{enumerate}\vspace*{-0.2em}
  \item\label{ass:power1} $\omega(x_1,y_i)\equiv 0$ and 
    $y_i(\alpha)=\beta$ but $y_1'(\alpha)\neq y_2'(\alpha)$;\\[-0.8em]
  \item\label{ass:power2} $h(x_1,y_i,z_i)\equiv 0$ for any 
    $h\in\I(\Ccur) \cap \KK[x_1,x_2,x_3]$\\
    and $z_i \in \Kbarpow$.
  \end{enumerate}
\end{lemma}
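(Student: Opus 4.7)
The plan is to produce the two power series as the $(x_1,x_2)$- and $(x_1,x_3)$-components of formal parametrizations by $x_1$ of the two branches of $\Ccur$ lying above the node.

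\textbf{Two local branches of $\Ccur_2$ at $(\alpha,\beta)$.} Since $(\alpha,\beta)$ is an ordinary double point of $\Ccur_2=\V(\omega)$, the quadratic initial form of $\omega$ at $(\alpha,\beta)$ splits, after the translation $u=x_1-\alpha$, $v=x_2-\beta$, as $(v-s_1 u)(v-s_2 u)$ with $s_1\neq s_2$. The generic position of $\Ccur_2$ underlying $\Hspace{}$ (no vertical tangent at a singular point of $\Ccur_2$) ensures $s_1,s_2\in\KKbar$ are finite, so $\omega$ factors in $\Kbarpow[v]$, via Weierstrass preparation or Hensel's lemma applied to the distinguished polynomial in $v$, as a product of two distinct monic linear factors $v-y_i(u)$ with $y_i(0)=0$ and $y_i'(0)=s_i$. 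Translating back produces two power series $y_1,y_2\in\Kbarpow$ with $\omega(x_1,y_i)=0$, $y_i(\alpha)=\beta$, and $y_i'(\alpha)=s_i$, and these are the only such series. This settles \ref{ass:power1}.

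\textbf{Lifting to $\Ccur$ and formula for $z_i$.} Assumption $\Hspace{4}$ gives $\pi_2^{-1}((\alpha,\beta))\cap\Ccur=\{\xx^{(1)},\xx^{(2)}\}\subset\reg(\Ccur)$, because $(\alpha,\beta)\in\app(\Ccur_2)=\sing(\Ccur_2)\setminus\pi_2(\sing(\Ccur))$. By $\Hspace{2}$, $\pi_2(T_{\xx^{(i)}}\Ccur)$ is a tangent line of $\Ccur_2$ at the node, so after relabeling its slope is $s_i$; in particular $\pi_1|_{\Ccur}$ is étale at $\xx^{(i)}$. The formal inverse function theorem then yields power series $\phi_2^{(i)},\dotsc,\phi_n^{(i)}\in\Kbarpow$ parametrizing the branch of $\Ccur$ through $\xx^{(i)}$, and the uniqueness above forces $\phi_2^{(i)}=y_i$. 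From $\Hspace{7}$, the polynomial $x_3\,\dy{\omega}-\rho_3$ belongs to $\I(\Ccur)$; evaluating along this branch produces the identity $\phi_3^{(i)}\cdot\dy{\omega}(x_1,y_i)=\rho_3(x_1,y_i)$ in $\Kbarpow$. Transversality of the two branches of $\omega$ at the node ensures that $\dy{\omega}(x_1,y_i)$ is a nonzero element of $\Kbarpow$, and dividing gives $z_i:=\phi_3^{(i)}=\rho_3(x_1,y_i)/\dy{\omega}(x_1,y_i)\in\Kbarpow$. Finally, $(x_1,y_i,z_i,\phi_4^{(i)},\dotsc)$ is a formal parametrization of a branch of $\Ccur$, so every $h\in\I(\Ccur)\cap\KK[x_1,x_2,x_3]$ evaluates to $0$ along it, proving \ref{ass:power2}.

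The main subtlety is that $z_i$ remains a genuine power series despite $\dy{\omega}(\alpha,\beta)=0$: the required cancellation between numerator and denominator is automatic, because the regular lift $\xx^{(i)}\in\reg(\Ccur)$ directly furnishes $\phi_3^{(i)}\in\Kbarpow$ via the inverse function theorem, and the explicit formula for $z_i$ is then forced by the \ODP relation in $\Hspace{7}$.
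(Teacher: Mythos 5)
Your first paragraph (existence and uniqueness of the two power-series branches of $\Ccur_2$ at the node, with distinct finite slopes) is essentially the paper's argument, which uses Puiseux expansions plus the generic position of $\Ccur_2$; apart from the loose invocation of Hensel's lemma (the distinguished polynomial has a double root modulo $x_1-\alpha$, so one needs the substitution $x_2-\beta=(x_1-\alpha)w$ or a Newton-polygon argument), that part is fine. The genuine gap is in the second paragraph: you assume from the outset that $(\alpha,\beta)\in\app(\Ccur_2)$, so that $\Hspace{4}$ hands you two regular points of $\Ccur$ above the node, and the whole construction (formal inverse function theorem at $\xx^{(1)},\xx^{(2)}$, identification $z_i=\phi_3^{(i)}$) rests on those two lifts. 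But the lemma is stated for an \emph{arbitrary} node of $\Ccur_2$: under $\Hspace{}$ a node may instead be the projection of a singular point of $\Ccur$, with a single (singular) preimage, and nothing in the hypotheses tells you in advance which case occurs. Worse, the only use of the lemma is in the proof of Proposition~\ref{prop:sing-node}, whose entire purpose is to \emph{decide} whether a given node is apparent, i.e. whether $\pi_2^{-1}(\yy)\cap\Ccur$ has one or two points; reading the lemma as restricted to apparent singularities would make that argument circular. For a node lying under one singular point your proof collapses: there are no two regular lifts, the inverse function theorem does not apply, and the $x_3$-coordinate along a branch of $\Ccur$ is a priori only a Puiseux series in $x_1-\alpha$.

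The paper's proof of assertion (2) deliberately uses no information about the fiber: one pseudo-divides $h$ by $\dy{\omega}\,x_3-\rho_3\in\I(\Ccur)$, uses $\I(\Ccur)\cap\KK[x_1,x_2]=\scal{\omega}$ to kill the remainder along $(x_1,y_i)$, and then shows $z_i\in\Kbarpow$ because $\Hspace{1}$ provides $h_0\in\I(\Ccur)\cap\KK[x_1,x_2,x_3]$ monic in $x_3$, so $z_i$ is integral over $\Kbarpow$, which is integrally closed. You would need this (or an equivalent fiber-independent) argument to cover all nodes. A minor additional point: $\dy{\omega}(x_1,y_i)\not\equiv 0$ does not follow from transversality of the branches; the correct reason is that $\omega$ is squarefree and monic in $x_2$, so $\V(\omega,\dy{\omega})$ is finite (equivalently $\KCcurde$ is finite by $\Hspace{5}$), and the one-dimensional branch cannot be contained in a finite set.
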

\begin{proof}
  According to \ref{hyp:inj} and \ref{hyp:odp}, 
  $\Ccur_2$ is in generic position in the sense of \cite[Def 3.1]{GK1996}.
As $(\alpha,\beta)$ is a node 
of $\Ccur_2 = \V(\omega)$, then
$\beta$ is a double root of $\omega(\alpha,x_2)$  by \cite[Prop 2.1 \& Thm 
3.1]{GK1996}. 
From the Puiseux theorem (see e.g. \cite[Cor 13.16]{Ei1995}), there
are exactly two Puiseux series $y_1,y_2$ of $\Ccur_2$ at 
$(\alpha,\beta)$. And for $i=1,2$, from \cite[\S 3.2]{Ka2008}, $y_i\in\Kbarpow$,
hence, $\omega(x_1,y_i)\equiv 0$ and
$y_i(\alpha)=\beta$.
Besides, as $(\alpha,\beta)$ is a node, we have $y_1'(\alpha)\neq y_2'(\alpha)$. 
This concludes the proof of assertion \eqref{ass:power1}.

Let $h\in\I(\Ccur) \cap \KK[x_1,x_2,x_3]$. By Euclidean division, there are 
$u,r\in\KK[x_1,x_2]$ and $m\geq 0$ such that
\[
 (\dy{\omega})^m \cdot h = u (\dy{\omega}\cdot x_3 - \rho_3) + r.
\]
Since $\I(\Ccur)\cap\KK[x_1,x_2]=\scal{\omega}$, $\omega$ divides $r$ 
in $\KK[x_1,x_2]$, so that,
\[
 (\dy{\omega}(x_1,y_i))^m \cdot h(x_1,y_i,z_i) \equiv 0,
\]
for $i=1,2$. 
As $\dy{\omega}(x_1,y_i)$ cannot be identically zero - $\KCcurde$ is finite by 
\ref{hyp:inj}, 
$h(x_1,y_i,z_i)\equiv 0$.

Finally, by \ref{hyp:noeth}, $\KK[\Ccur_3]$ is integral over $\KK[\Ccur_2]$,
so that there is $$h_0 \in\I(\Ccur_3) = \I(\Ccur) \cap \KK[x_1,x_2,x_3]$$ monic 
in $x_3$.
From above, for $i=1,2$, $h_0(x_1,y_i,z_i)\equiv 0$ and
$z_i$ is integral over $\Kbarpow$. As $\KKbar$ is an algebraically
closed field of characteristic 0, $\Kbarpow$ is integrally closed 
\cite[Cor 13.15]{Ei1995}.
Thus, as a fraction, $z_i \in \Kbarpow$.
\end{proof}

\begin{proposition}\label{prop:sing-node}
The following assertions are equivalent:
 \begin{enumerate}\vspace*{-0.2em}
  \item $\yy \in \app(\Ccur_2)$;\item $\yy$ is a node of $\Ccur_2$ and
  \begin{equation}\label{eq:ineq}
    (\dyy{\omega}\cdot\dx{\rho_3}-\dxy{\omega}\cdot\dy{\rho_3})(\yy)\neq 0.
  \end{equation}
 \end{enumerate}
\end{proposition}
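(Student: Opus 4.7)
The plan is to reduce the equivalence to a comparison of the two branch-lifts produced by Lemma~\ref{lem:powersol}. The implication from assertion~(1) to the ``node'' part of~(2) is immediate from \ref{hyp:node}, so I focus on the remaining statement: assuming $\yy=(\alpha,\beta)$ is a node of $\Ccur_2$, the inequality \eqref{eq:ineq} is equivalent to $\yy \in \app(\Ccur_2)$. Lemma~\ref{lem:powersol} furnishes two power series $y_1, y_2 \in \Kbarpow$ describing the two branches of $\Ccur_2$ at $\yy$, with distinct slopes $y_1'(\alpha) \neq y_2'(\alpha)$, together with the lifts $z_i = \rho_3(x_1, y_i)/\dy{\omega}(x_1, y_i) \in \Kbarpow$; the candidate preimages of $\yy$ in $\Ccur_3$ are then the points $(\alpha, \beta, z_i(\alpha))$ for $i=1,2$.

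The core computation is to evaluate $z_i(\alpha)$ in closed form. Because $\yy$ is a node, $\dy{\omega}(\alpha,\beta) = 0$; because $z_i$ belongs to $\Kbarpow$ (and not just to a field of Laurent series), we also have $\rho_3(\alpha,\beta) = 0$. Taylor expanding numerator and denominator to first order in $(x_1-\alpha)$ along the branch $x_2 = y_i(x_1)$ gives
\[
z_i(\alpha) \;=\; \frac{\dx{\rho_3}(\yy) + \dy{\rho_3}(\yy)\, y_i'(\alpha)}{\dxy{\omega}(\yy) + \dyy{\omega}(\yy)\, y_i'(\alpha)}.
\]
The denominator is non-zero: the slopes $y_i'(\alpha)$ are the two distinct roots of $\dxx{\omega}(\yy) + 2\dxy{\omega}(\yy)\, t + \dyy{\omega}(\yy)\, t^2 = 0$; the branches being ordinary power series forces $\dyy{\omega}(\yy)\neq 0$, and the non-zero discriminant of this quadratic then yields $\dxy{\omega}(\yy) + \dyy{\omega}(\yy)\, y_i'(\alpha) \neq 0$. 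A direct algebraic manipulation gives
\[
z_1(\alpha) - z_2(\alpha) \;=\; \frac{(y_2'(\alpha) - y_1'(\alpha))\,\bigl(\dyy{\omega}\,\dx{\rho_3} - \dxy{\omega}\,\dy{\rho_3}\bigr)(\yy)}{\bigl(\dxy{\omega}(\yy) + \dyy{\omega}(\yy)\, y_1'(\alpha)\bigr)\bigl(\dxy{\omega}(\yy) + \dyy{\omega}(\yy)\, y_2'(\alpha)\bigr)},
\]
and, since $y_1'(\alpha)\neq y_2'(\alpha)$, the condition $z_1(\alpha)\neq z_2(\alpha)$ is precisely \eqref{eq:ineq}.

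To close the argument I translate this into the cardinality of $\pi_2^{-1}(\yy) \cap \Ccur$. By \ref{hyp:bij} the map $\pi_3$ is injective on $\Ccur$, so points of $\pi_2^{-1}(\yy)\cap\Ccur$ are in bijection with their $\pi_3$-images; by the second assertion of Lemma~\ref{lem:powersol} each such image has third coordinate in $\{z_1(\alpha), z_2(\alpha)\}$, so $|\pi_2^{-1}(\yy)\cap\Ccur|$ equals the number of distinct values in this set. Combining this with \ref{hyp:double} (which forces cardinality $2$ when $\yy\in\app(\Ccur_2)$) and \ref{hyp:nosingsecant} (which forces cardinality $1$ when $\yy\in\pi_2(\sing(\Ccur))$, i.e., when $\yy\in\sing(\Ccur_2)\setminus\app(\Ccur_2)$) yields the desired equivalence $\yy\in\app(\Ccur_2)$ iff $z_1(\alpha)\neq z_2(\alpha)$ iff \eqref{eq:ineq}. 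The main obstacle will be the Taylor-expansion step: one has to carefully verify that both $\dy{\omega}(x_1, y_i(x_1))$ and $\rho_3(x_1, y_i(x_1))$ vanish to order exactly one at $\alpha$ and that the denominator $\dxy{\omega}(\yy) + \dyy{\omega}(\yy)\, y_i'(\alpha)$ does not vanish, which both rely on the non-degeneracy of the node together with the fact that its two branches are ordinary power series.
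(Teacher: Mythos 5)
Your computational core reproduces the paper's own argument: the evaluation $z_i(\alpha)=\bigl(\dx{\rho_3}(\yy)+y_i'(\alpha)\dy{\rho_3}(\yy)\bigr)/\bigl(\dxy{\omega}(\yy)+y_i'(\alpha)\dyy{\omega}(\yy)\bigr)$, the non-vanishing of the denominator via the quadratic satisfied by the two slopes, and the determinant computation showing $z_1(\alpha)\neq z_2(\alpha)$ if and only if \eqref{eq:ineq} are exactly the paper's identities \eqref{eq:homo}, \eqref{eq:derivnonzero} and its M\"obius-map argument; the step you flag as the main obstacle is therefore not where the difficulty lies. The genuine gap is in your final counting step. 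You assert that every point of $\pi_2^{-1}(\yy)\cap\Ccur$ has $\pi_3$-image with third coordinate in $\{z_1(\alpha),z_2(\alpha)\}$, citing the second assertion of Lemma~\ref{lem:powersol}. That lemma only gives the reverse containment: the branch lifts $(x_1,y_i,z_i)$ annihilate every element of $\I(\Ccur)\cap\KK[x_1,x_2,x_3]$, hence the two points $(\alpha,\beta,z_i(\alpha))$ lie on $\Ccur_3$; it does not say that the fiber of $\Ccur_3$ over $\yy$ is exhausted by these two points. Yet exhaustiveness is precisely what your implication ``$\yy\in\app(\Ccur_2)$ implies \eqref{eq:ineq}'' rests on: from two distinct preimages (the assumption that apparent singularities have exactly two) you want to conclude that the two values $z_i(\alpha)$ are distinct, and with only the containment you proved, a third fiber point not on either branch lift is not excluded. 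The missing statement is true, but proving it requires a separate argument (every local branch of $\Ccur_3$ at a fiber point projects onto one of the two local branches $y_1,y_2$ of the node, over which, away from finitely many points, the fiber of $\Ccur$ is the single point given by the parametrization of \ref{hyp:odp}; then pass to the limit, using properness from \ref{hyp:noeth}), none of which is in your proposal.

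Note that the paper closes this direction without any fiber count: arguing contrapositively, if $\yy$ is a node and \eqref{eq:ineq} fails, then $z_1(\alpha)=z_2(\alpha)$, so the two lifted branches meet at one common point with two distinct tangent lines (projecting onto the two tangents of the node); that point is therefore singular on $\Ccur$, so $\yy\in\pi_2(\sing(\Ccur))$ and is not apparent. Your other direction (node plus \eqref{eq:ineq} implies apparent) is fine and matches the paper: the two distinct lifted points yield two distinct preimages on $\Ccur$, and \ref{hyp:nosingsecant} excludes a singular preimage. So either repair your exhaustiveness claim with a limit-along-branches argument as sketched above, or replace the cardinality bookkeeping by the paper's tangent-line argument.
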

\begin{proof}
Assume that $\yy=(\alpha,\beta)$ is a node. We first prove that if
\eqref{eq:ineq} holds then, there are two distinct points of $\Ccur$ that
project on $\yy$. By Lemma~\ref{lem:powersol}, there exist $y_1, 
y_2\in\Kbarpow$ such that $y_1'(\alpha)\neq y_2'(\alpha)$ and 
$y_i(\alpha)=\beta$ and $\omega(x_1,y_i)\equiv 0$, for $i=1,2$.
For $i=1,2$ let $z_i = \frac{\rho_3(x_1,y_i)}{
  \dy{\omega}(x_1,y_i)}$. By Lemma~\ref{lem:powersol},
\[
 \dy{\omega}(x_1,y_i)\cdot z_i \equiv \rho_3(x_1,y_i).
\]
Since $z_i\in \Kbarpow$, by derivation and evaluation in 
$x_1=\alpha$,
\begin{equation}\label{eq:homo}
\scalebox{0.97}{$\displaystyle
\big(\dxy{\omega}(\albe)+y_i'(\alpha)\dyy{\omega}(\albe)\big)
z_i(\alpha)
=\dx{\rho_3}(\albe)+y_i'(\alpha)\dy{\rho_3}(\albe).
$}
\end{equation}
By Lemma~\ref{lem:powersol}, $\omega(x_1,y_i)\equiv 0$. 
Differentiating twice and evaluating in $\alpha$, we get
\[
  \dxx{\omega}(\albe) + 2y_i'(\alpha) \dxy{\omega}(\albe) + 
y_i'(\alpha)^2\dyy{\omega}(\albe)=0.
\]
Since $y_1'(\alpha)\neq y'_2(\alpha)$ by Lemma~\ref{lem:powersol}, they
are simple roots of $$\dxx{\omega}(\albe) + 2U\dxy{\omega}(\albe) 
+ U^2\dyy{\omega}(\albe)\in \KKbar[U].$$ Therefore,
\begin{equation}\label{eq:derivnonzero}
\dxy{\omega}(\albe)+y_i'(\alpha)\dyy{\omega}(\albe) \neq 0.
\end{equation}
Now let $H\colon\KKbar\to\KKbar$ such that for all $t\in\KKbar$ 
\[
\scalebox{1}{$\displaystyle
  H(t) = \frac{\dx{\rho_3}(\albe)+t\cdot\dy{\rho_3}(\albe)}{\dxy{\omega}(\albe) 
+ t\cdot\dyy{\omega}(\albe)}.
$}
\]
Using \eqref{eq:homo} and according to \eqref{eq:derivnonzero},  
$H(y_i'(\alpha))=z_i(\alpha)$ for $i=1,2$.
But $H$ is either bijective or constant, whether 
\eqref{eq:ineq} respectively holds or not. As $y_1'(\alpha)\neq
y'_2(\alpha)$, \eqref{eq:ineq} holds if, and only if, $z_1(\alpha)\neq
z_2(\alpha)$. By Lemma~\ref{lem:powersol}, \eqref{ass:power2},
$\zz_1=(\alpha,\beta,z_1(\alpha))$ and $\zz_2=(\alpha,\beta,z_2(\alpha))$ are
points of $\Ccur_3$ projecting on $\yy$. From \ref{hyp:bij}, there are
$\xx,\xx'$ in $\Ccur$ that project on resp. $\zz_1$ and $\zz_2$. They are 
distinct if, and only if, \eqref{eq:ineq} holds.

We can now prove the equivalence statement. We just proved that, if $\yy$ is a
node and \eqref{eq:ineq} holds then, $\yy$ is the projection of two distinct
points, that cannot be singular by \ref{hyp:inj}. Conversely, either $\yy$ is
not a node, and we conclude by \ref{hyp:app} or, by the above discussion, it is
the projection of a point of $\Ccur$, with two distinct tangent lines (that
project on the ones of $\yy$). Hence, $\yy$ is the projection of a singular
point and then, not in $\app(\Ccur_2)$, by definition.
\end{proof}

\section{Connectivity recovery}
We now investigate the connectivity relation between $\CcurR$ and $\CcurRde$. 
The following lemma is partly adapted from \cite[Lemma 6.2]{Ka2008}.
\begin{lemma}\label{lem:realcrit}
  Let $\xx =(\xx_1\dotsc,\xx_n) \in \KCcur$, then $\xx \in \RR^{n}$ if and only 
if $\xx_1 \in \RR$, and
\(
  \KCcurde - \app(\Ccur_2) = \pi_2\big(\KCcur\big).
\)
\end{lemma}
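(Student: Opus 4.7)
The plan is to first establish the auxiliary inclusion $\pi_2(\KCcur) \subset \KCcurde$, and then to deduce both parts of the statement from it, using \ref{hyp:inj}, \ref{hyp:nosingsecant}, and \ref{hyp:app}. To prove this inclusion, I would split the source into $\WoCcur$ and $\sing(\Ccur)$. If $\xx \in \WoCcur$, the tangent $T_\xx\Ccur$ lies in $\ker d_\xx\pi_1$; since $\pi_1 = \pi_1 \circ \pi_2$ as maps on $\KKbar^n$, the image $\pi_2(T_\xx\Ccur)$ also lies in $\ker d\pi_1$, and by \ref{hyp:projtang} it is a tangent line to $\Ccur_2$ at $\pi_2(\xx)$. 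Hence $\pi_2(\xx)$ is either a regular critical point of $\pi_1$ on $\Ccur_2$, or a singular point of $\Ccur_2$; in both cases $\pi_2(\xx) \in \KCcurde$. If $\xx \in \sing(\Ccur)$, then \ref{hyp:nosingsecant} ensures that $\{\xx\}$ is the entire fiber of $\pi_2|_\Ccur$ over $\pi_2(\xx)$, and local branch counting under the finite morphism $\pi_2|_\Ccur$ transfers the singularity to $\pi_2(\xx) \in \sing(\Ccur_2) \subset \KCcurde$.

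For the first statement, the forward implication is trivial; for the converse, assume $\xx \in \KCcur$ with $\xx_1 \in \RR$. Then $\pi_2(\xx) \in \KCcurde$ has its first coordinate in $\RR$. Since $\KCcurde$ is defined over $\RR$ it is preserved by complex conjugation, and \ref{hyp:inj} guarantees that $\pi_1$ is injective on it. Therefore $\pi_2(\xx)$ and $\overline{\pi_2(\xx)}$ both lie in $\KCcurde$ and share the real first coordinate $\xx_1$, forcing $\pi_2(\xx)=\overline{\pi_2(\xx)} \in \RR^2$. Then \ref{hyp:nosingsecant} gives $\pi_2^{-1}(\pi_2(\xx))\cap\Ccur = \{\xx\}$, and since $\overline{\xx}$ is also a preimage of the real point $\pi_2(\xx)$, we conclude $\overline{\xx}=\xx$, so $\xx \in \RR^n$.

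For the set equality, the inclusion $\pi_2(\KCcur) \subset \KCcurde - \app(\Ccur_2)$ combines the auxiliary inclusion with an incompatibility of fiber cardinalities: by \ref{hyp:app} any $\yy \in \app(\Ccur_2)$ has exactly two preimages in $\Ccur$, whereas \ref{hyp:nosingsecant} gives only one for $\pi_2(\xx)$ when $\xx \in \KCcur$. Conversely, take $\yy \in \KCcurde - \app(\Ccur_2)$. If $\yy \in \sing(\Ccur_2)$, the defining equality $\app(\Ccur_2)=\sing(\Ccur_2)-\pi_2(\sing(\Ccur))$ forces $\yy \in \pi_2(\sing(\Ccur)) \subset \pi_2(\KCcur)$. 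Otherwise $\yy \in \WoCcurde$, and surjectivity of the finite map $\pi_2|_\Ccur$ (from \ref{hyp:noeth}) provides $\xx \in \Ccur$ with $\pi_2(\xx)=\yy$; if $\xx \in \sing(\Ccur)$ we are done, and if $\xx \in \reg(\Ccur)$ then \ref{hyp:projtang} identifies $\pi_2(T_\xx\Ccur)$ with the tangent to $\Ccur_2$ at $\yy$, which lies in $\ker d\pi_1$, so $T_\xx\Ccur$ does too and $\xx\in\WoCcur$. The main obstacle will be the implicit claim that singular points of $\Ccur$ map to singular points of $\Ccur_2$; although intuitive, it needs a careful local argument under the single-fiber condition of \ref{hyp:nosingsecant} to remain compatible with the definition of $\app(\Ccur_2)$ without circularity.
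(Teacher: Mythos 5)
Your decomposition of the set equality (via \ref{hyp:projtang} for critical points, the fiber-cardinality clash between \ref{hyp:nosingsecant} and the cardinality-two condition in \ref{hyp:app}, the definition of $\app(\Ccur_2)$, and surjectivity of the finite map coming from \ref{hyp:noeth}) is correct and in fact more detailed than the paper's own one-sentence justification of that point. For the realness claim you take a genuinely different route: the paper argues effectively, first using \cite[Prop 3.1]{GK1996} to write $\xx_2=\sigma_2(\xx_1)$ with $\sigma_2\in\KK[x_1]$ (subresultant parametrization of the critical fiber of $\Ccur_2$), then using \ref{hyp:nosingsecant} and a lexicographic Gr\"obner basis of $\I(\Ccur)+\scal{x_1-\xx_1,x_2-\xx_2}$ \cite[Thm 3.2]{CLO2015} to express all remaining coordinates as real polynomial functions of $\xx_1$; you instead use stability of $\Ccur$ and $\KCcurde$ under the conjugation of $\KKbar=\RR[\sqrt{-1}]$ over $\RR$, injectivity of $\pi_1$ on $\KCcurde$ from \ref{hyp:inj}, and the singleton fiber from \ref{hyp:nosingsecant}. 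Your argument is shorter and self-contained; the paper's is constructive and matches the tools reused in its algorithmic sections. Both arguments rest on the same auxiliary inclusion $\pi_2(\KCcur)\subset\KCcurde$ that you isolate (the paper needs it implicitly for \cite[Prop 3.1]{GK1996} to apply at $\pi_2(\xx)$).

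That inclusion is the one genuine soft spot, and your sketched fix does not suffice as stated: ``local branch counting under the finite morphism'' cannot transfer a singularity downstairs, because several branches of $\Ccur$ at $\xx$ may map onto a single smooth branch of $\Ccur_2$. Concretely, for $\Ccur=\V(x_1-x_2,\;x_3^2-x_1^2)\subset\KKbar^3$ the origin is a node of $\Ccur$ whose $\pi_2$-fiber is the singleton $\{\OO\}$, yet it maps to a smooth point of the line $\Ccur_2=\V(x_1-x_2)$ with non-vertical tangent; this curve (with $\Pset=\emptyset$) satisfies \ref{hyp:noeth}--\ref{hyp:nosingsecant}, and the set equality of the lemma fails for it. So no branch-counting or fiber-counting argument based on those hypotheses alone can close the gap: the missing ingredient is \ref{hyp:odp}. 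The \ODP of \ref{hyp:odp} exhibits $\Ccur$ as the Zariski closure of the graph of a rational map over $V(\omega)=\Ccur_2$, so $\pi_2$ restricted to $\Ccur$ is birational onto $\Ccur_2$ as well as finite (\ref{hyp:noeth}). If $\yy=\pi_2(\xx)$ were a regular point of $\Ccur_2$ for some $\xx\in\sing(\Ccur)$, then only one component of $\Ccur$ passes through $\xx$ (two components would both dominate the unique component of $\Ccur_2$ through $\yy$, contradicting birationality), $\Ocal_{\Ccur_2,\yy}$ is a discrete valuation ring, hence integrally closed, and since the fiber is $\{\xx\}$ by \ref{hyp:nosingsecant} the induced extension $\Ocal_{\Ccur_2,\yy}\hookrightarrow\Ocal_{\Ccur,\xx}$ is finite, birational, hence an equality, contradicting $\xx\in\sing(\Ccur)$. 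With this supplement (a point the paper's own proof also leaves implicit), your proof is complete.
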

\begin{proof}
The second point is a direct consequence of \ref{hyp:projtang}, 
as the non-singular critical points of $\Ccur$ project to the ones of $\Ccur_2$.

Let $\xx \in \KCcur$, and assume $\xx_1 \in \RR$. By \cite[Prop 3.1]{GK1996}, 
as $\Ccur$ is in 
generic position, computing sub-resultant 
sequences gives a rise to $\sigma_2 \in \KK[x_1]$ such that 
$\xx_2=\sigma_2(\xx_1)\in\RR$.
By \ref{hyp:nosingsecant}, the line $\V(x_1-\xx_1,x_2-\xx_2)$ 
intersects $\Ccur$ at exactly one point. Hence, by \cite[Thm 3.2]{CLO2015}, 
computing a Gr\"{o}bner basis of the ideal $$\I(\Ccur) + \scal{x_1-\xx_1, 
x_2-\xx_2} \subset \RR[\XX]$$ with respect to the lexicographic order 
$x_1\prec\cdots\prec x_n$ gives a rise to $n-2$ polynomials $\sigma_3, \dotsc, 
\sigma_n$ such that $\sigma_i \in \RR[\XXi[i-1]]$ and 
$\sigma_i(\xx_1,\dotsc,\xx_{i-1}) = \xx_i$, for $3\leq i \leq n$.
Hence, the triangular system formed by the $\sigma_i$'s raises polynomials 
$\tau_2, \dotsc, \tau_n \in \RR[x_1]$ such that $\xx_i = \tau_i(\xx_1)$ for 
$i\geq 2$, thus $\xx \in \RR^n$. The converse is straightforward.
\end{proof}

The following lemma shows that, except at apparent singularities, the real 
traces of $\Ccur$ and $\Ccur_2$ share the same connectivity properties.
\begin{lemma}\label{lem:pi2homeo}
The restriction of $\pi_2$ to $\CcurR - \pi_2^{-1}( 
\app(\Ccur_2))$ is a \SA homeomorphism of inverse $\phi_2$, defined 
on $\CcurRde - \app(\Ccur_2)$ \remi{such that}
\[
   \text{for all $\yy \notin \KCcurde$,}\quad
   \scalebox{1}{$\displaystyle
   \phi_2(\yy) = \Big(\yy,\frac{\rho_3(\yy)}{\dy{\omega}(\yy)},
\dotsc,\frac{\rho_n(\yy)}{\dy{\omega}(\yy)}\Big).
    $}
\]
\end{lemma}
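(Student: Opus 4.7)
The plan is to verify that the restriction of $\pi_2$ to $\CcurR - \pi_2^{-1}(\app(\Ccur_2))$ is a continuous \SA bijection onto $\CcurRde - \app(\Ccur_2)$ whose inverse coincides with the stated rational formula outside $\KCcurde$, and then to upgrade this bijection to a \SA homeomorphism via properness of the finite morphism $\pi_2|_\Ccur$.

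I would first establish bijectivity. By \ref{hyp:odp}, the parametrization $\Rparam$ provides a rational section of $\pi_2|_\Ccur$, so this map is birational. Combining this with \ref{hyp:nosingsecant}, which forces $\pi_2^{-1}(\pi_2(\xx)) \cap \Ccur = \{\xx\}$ for $\xx \in \sing(\Ccur)$, and with the very definition $\app(\Ccur_2) = \sing(\Ccur_2) - \pi_2(\sing(\Ccur))$, yields injectivity: whenever $\pi_2(\xx) = \pi_2(\xx')$ with $\xx \neq \xx'$ in $\Ccur$, the image $\pi_2(\xx)$ must lie in $\app(\Ccur_2)$. For surjectivity, I would exhibit a real preimage for every $\yy \in \CcurRde - \app(\Ccur_2)$. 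If $\yy \notin \KCcurde$, then $\dy{\omega}(\yy) \neq 0$ and \ref{hyp:odp} directly yields the real preimage $\phi_2(\yy) = (\yy, \rho_3(\yy)/\dy{\omega}(\yy), \ldots, \rho_n(\yy)/\dy{\omega}(\yy))$ stated in the lemma. If instead $\yy \in \KCcurde - \app(\Ccur_2)$, Lemma~\ref{lem:realcrit} provides some $\xx \in \KCcur$ with $\pi_2(\xx) = \yy$, and since $\xx_1 = \yy_1 \in \RR$, the same lemma yields $\xx \in \RR^n$. This argument also verifies the explicit rational formula for $\phi_2$ on $\CcurRde - \KCcurde$, where it is manifestly continuous (the denominator never vanishes).

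The main obstacle is the continuity of $\phi_2$ at the finitely many points of $\KCcurde - \app(\Ccur_2)$, where the rational formula breaks down because $\dy{\omega}(\yy) = 0$. I would handle this via a properness argument: by \ref{hyp:noeth}, $\pi_2|_\Ccur$ is a finite morphism, hence proper for the Euclidean topology, so its real restriction $\CcurR \to \CcurRde$ is closed between locally compact Hausdorff spaces. Removing the finite set $\pi_2^{-1}(\app(\Ccur_2))$ on the source and $\app(\Ccur_2)$ on the target preserves closedness, because the source points we remove are exactly the preimages of the removed target points, so the image of any closed set in the restricted source is the intersection of a closed subset of $\CcurRde$ with $\CcurRde - \app(\Ccur_2)$. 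A continuous closed bijection between Hausdorff spaces is a homeomorphism; since $\pi_2$ is polynomial and all sets in play are \SA, the inverse $\phi_2$ is automatically \SA, which completes the proof.
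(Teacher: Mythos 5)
Your proposal is correct, but it reaches the conclusion by a genuinely different route than the paper at the crucial step. The paper constructs $\phi_2$ explicitly and piecewise (the rational formula off $\KCcurde$, the unique critical preimage given by Lemma~\ref{lem:realcrit} and \ref{hyp:nosingsecant} at the remaining finitely many points), observes it is \SA because its graph is, and then proves continuity of $\phi_2$ at each point of $\KCcurRde-\app(\Ccur_2)$ by a pointwise path argument: any \SA path approaching such a point lifts to a bounded \SA path in $\CcurR$ (boundedness coming from properness of $\pi_2$, i.e.\ \ref{hyp:noeth}), which extends continuously to the endpoint by \cite[Prop 3.21]{BPR2016}, and uniqueness of the preimage identifies the limit with $\phi_2(\yy)$. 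You instead prove that the restricted projection is a continuous \SA closed bijection and invoke the fact that such a map is a homeomorphism; this is cleaner and avoids the pointwise analysis, and the saturation argument (the removed source points are exactly $\pi_2^{-1}(\app(\Ccur_2))$) is exactly what makes closedness survive the restriction. Two soft spots are worth noting, though neither breaks the argument. First, your injectivity step quietly uses the standard fact that a finite birational map of curves is injective over smooth points of the image (two distinct preimages of a non-node would otherwise not force the image into $\sing(\Ccur_2)$); this is true but deserves a word, and it can be bypassed entirely by arguing as the paper implicitly does: off $\KCcurde$ the fiber is the single point given by \ref{hyp:odp}, and over $\KCcurde-\app(\Ccur_2)$ Lemma~\ref{lem:realcrit} produces a critical preimage which \ref{hyp:nosingsecant} declares to be the whole fiber. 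Second, your appeal to ``proper between locally compact Hausdorff spaces, hence closed'' is a statement about $\RRo$; since the lemma is stated over an arbitrary real closed field $\RR$, it must be rephrased semi-algebraically (finite morphisms have bounded preimages of bounded sets by \ref{hyp:noeth}, and images of closed bounded \SA sets under continuous \SA maps are closed and bounded, as in \cite{BPR2016}), which is precisely the transfer the paper's own citation of \cite[Prop 3.21]{BPR2016} performs.
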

\begin{proof}
Consider $\yy \in \CcurRde - \app(\Ccur_2)$. As $\Ccur_2 =
\V(\omega)$, either $\dy{\omega}(\yy)$ is non-zero or $\yy \in \KCcurRde - 
\app(\Ccur_2)$.
In the latter case, according to 
Lemma~\ref{lem:realcrit}
$$\pi_2^{-1}(\yy) \cap \Ccur \subset \KCcurR.$$
By \ref{hyp:nosingsecant} there is a unique 
 $\xx \in \KCcurR - \pi_2^{-1}(\app(\Ccur_2))$ such that $\pi_2(\xx) = \yy$.
Let $\phi_2:\CcurRde - \app(\Ccur_2) \to \RR^n$ be  
defined as:
\begin{itemize}[label=$\rhd$]
\item \textbf{if} $\yy \in 
\KCcurde-\app(\Ccur_2)$, then $\phi_2(\yy)$ is the unique $\xx$ satisfying 
$\pi_2(\xx)= \yy$;
\item \textbf{else} $\phi_2(\yy)=
\big(\yy,\:(\rho_3/\dy{\omega})(\yy), \dotsc,\: (\rho_n/\dy{\omega})
(\yy)\big).$ 
\end{itemize}
Since its graph is a \SA set by construction, $\phi_2$ is a \SA map 
according to \cite[\S 2.5.2]{BPR2016}.
Moreover, if $\yy \in \CcurRde - \app(\Ccur_2)$, then $\phi_2(\yy)$ is the 
unique element of $\CcurR - \pi_2^{-1}(\app(\Ccur_2))$ such that 
$\pi_2(\phi_2(\yy)) = \yy$. 

Since $\dy{\omega}(\yy)$ does not vanish on this set, $\phi_2$ is 
continuous on $\CcurRde - \KCcurde$. We prove that it is continuous 
everywhere.
Let $\yy \in \KCcurRde-\app(\Ccur_2)$ and suppose there is 
a \SA path $\gamma:[0,1] \to \CcurRde$, such that 
$\gamma(0) = \yy$ and $\gamma(t) \in  \CcurRde - 
\KCcurde$, for all $t>0$.
Consider the \SA path $\tau:t\in(0,1]\mapsto \phi_2(\gamma(t)) 
\in \CcurR$. Since $\pi_2$ is a proper map by \ref{hyp:noeth}, $\tau$ is 
bounded. 
Thus, by \cite[Prop 3.21]{BPR2016}, $\tau$ can be continuously 
extended in $t=0$ and by continuity, $\tau(0) \in \CcurR$ and 
$\pi_2(\tau(0))= \pi_2(\phi_2(\yy))=\yy$.
Hence, by uniqueness $\tau(0) = \phi_2(\yy)$ and, by \cite[Prop 3.6 \& 
3.20]{BPR2016}, $\phi_2$ is continuous in $\yy$.
Since $\KCcurde$ is finite, no such path $\gamma$ exists if, and only if, both 
$\yy$ and $\xx$ are isolated points 
so that $\phi_2$ is trivially 
continuous at $\yy$.

In conclusion, $\phi_2$ is a \SA map, continuous on $\CcurRde - \app(\Ccur_2)$, 
of inverse the restriction of $\pi_2$ to $\CcurR - \pi_2^{-1}(\app(\Ccur_2))$ 
by Lemma~\ref{lem:realcrit}.
Hence, this latter restriction is a \SA homeomorphism, as stated.
\end{proof}

It remains to investigate how the connectivity of the real traces of $\Ccur$ and
$\Ccur_2$ are related close to apparent singularities.
Recall that an (ambient) isotopy of $\RR^n$ is a \remi{continuous map} 
$\Hiso\colon \RR^n \times [0,1] \to \RR^n$ such that $\yy\mapsto 
\Hiso(\yy,0)$ is the identity map and $\yy\mapsto 
\Hiso(\yy,t)$ is a homeomorphism for $t \in [0,1]$.
Then two subsets $Y$ and $Z$ of $\RR^n$ are isotopy equivalent
if there is an isotopy $\Hiso$ of $\RR^n$ such that $\Hiso(Y,1)=Z$.

Recall also that a graph $\Gg$ is the data of a set $\Vg$ of vertices, together 
with a set $\Eg$ of edges $\{\bvv,\bvv'\}$, where $\bvv,\bvv'\in\Vg$. 
For any $\yy,\yy' \in \RR^2$, we will denote by $[\yy,\yy']$, the closed line 
segment $\{ (1-t)\yy + t\yy', t\in[0,1] \}$. Then, if $\Vg \subset \RR^2$, we 
call the piecewise linear curve, denoted $\Ccur_{\Gg}$, associated to $\Gg$ the 
union of 
$[\bvv,\bvv']$ for all $\{\bvv,\bvv'\}\in\Eg$.
In the following, we note $\Pset_2 = \pi_2(\Pset)$. 
\begin{definition}\label{def:topology2d}
 Let $\Ggp=(\Vgp,\Egp)$ be a graph, with $\Vgp \subset \RR^2$.
 Then we say that $\Ggp$ is a \emph{real topology graph} of 
$(\Ccur_2,\Pset_2)$ if
 \begin{enumerate}
\item $\CcurRde$ is isotopy equivalent to $\Ccur_{\Ggp}$;
  \item the points of $\KCcurRde\cup \PsetRde$ are embedded in $\Vgp$; 
  \item no two points of $\KCcurRde$ have adjacent vertices in $\Gg$.
 \end{enumerate}
\end{definition}
For the rest of this section, let $\Ggp$ be a \emph{real topology 
graph} of $(\Ccur_2,\Pset_2)$,
$\Hiso$ the induced isotopy and, for $t\in[0,1]$, $\Hiso_t: \yy 
\in \RR^2 \to \Hiso(\yy,t)$, so that $\Hun(\Ccur_{\Ggp}) = \CcurRde$.

Consider \SA paths $\gamma_1,\dotsc,\gamma_4$ in $\RR^2$, 
\review{all starting from a unique point $\pp \in \RR^2$, and not intersecting 
each other elsewhere}
(see Figure~\ref{fig:doublenode}), so that the $\gamma_i$'s can be pairwise 
associated with respect 
to their unique \emph{opposite branch at $\pp$}:
given an orientation of $\RR^2$ and a sufficiently small circle centered at 
$\pp$, we arrange the $\gamma_i's$ around $\pp$ with respect to their 
unique intersection with this circle 
\cite[Thm 9.3.6]{BCR1998};
we then pairwise associate them to the one after next in the 
above arrangement (it does not depend on the chosen 
orientation). Up to reindexing, say that $(\gamma_1,\gamma_3)$ and 
$(\gamma_2,\gamma_4)$ are the \emph{unique couples of opposite branches at 
$\pp$}.

The next lemma follows directly from classical results in knots and braids 
theory, see \cite[Prop 1.9-10]{BZ2003} for the key arguments.
\begin{lemma}\label{lem:arrangebranch}
Let the $\gamma_i$'s as above, and \remi{any} isotopy $\Hisot$ of $\RR^2$.
The curves $(\Hunt(\gamma_1),\Hunt(\gamma_3))$  and 
$(\Hunt(\gamma_2),\Hunt(\gamma_4))$ do not intersect each other, except at 
$\Hunt(\pp)$. They are the unique couples of opposite branches at this point.
\end{lemma}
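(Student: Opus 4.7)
The lemma has two independent parts: showing that the images $\Hunt(\gamma_i)$ do not intersect pairwise except at $\Hunt(\pp)$, and showing that the pairings into opposite branches are preserved. The first part is almost immediate from the fact that $\Hunt$ is a homeomorphism of $\RR^2$: injectivity forces $\Hunt(\gamma_i)\cap\Hunt(\gamma_j) = \Hunt(\gamma_i\cap\gamma_j)$ for $i\neq j$, which by assumption on the $\gamma_i$'s equals $\{\Hunt(\pp)\}$.

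For the second part, the plan is to transport the combinatorial data encoding opposite branches through $\Hunt$. First I would fix a small enough open disk $D$ around $\pp$ such that each $\gamma_i\cap D$ is connected and meets $\partial D$ in a single point $\qq_i$, which exists by the local \SA triviality theorem \cite[Thm 9.3.6]{BCR1998} already invoked to define opposite branches. By continuity of $\Hunt$ and the fact that $\Hunt(\pp)$ has arbitrarily small neighborhoods, one can similarly fix a small disk $D'$ around $\Hunt(\pp)$ inside $\Hunt(D)$ such that each $\Hunt(\gamma_i)$ meets $\partial D'$ in a single point $\qq_i'$, defining the cyclic arrangement on the image side.

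The key remaining step is to invoke that a homeomorphism of $\RR^2$ preserves or reverses the cyclic ordering of branches meeting at a common point; this is the local content of \cite[Prop 1.9--1.10]{BZ2003} cited just above the statement. Concretely, the cyclic sequence $(\qq_1',\qq_2',\qq_3',\qq_4')$ on $\partial D'$ is either the image or the reverse of $(\qq_1,\qq_2,\qq_3,\qq_4)$ on $\partial D$. In either case, the pairing by \emph{next-after-next} is invariant: for any cyclic arrangement of four elements, the partition into opposite pairs $\{\{1,3\},\{2,4\}\}$ coincides with the one induced by the reverse arrangement. Hence $(\Hunt(\gamma_1),\Hunt(\gamma_3))$ and $(\Hunt(\gamma_2),\Hunt(\gamma_4))$ are exactly the two couples of opposite branches of the four curves at $\Hunt(\pp)$.

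The main obstacle is not the combinatorial pairing argument, which is elementary, but the appeal to orientation-preservation-or-reversal of $\Hunt$ at a point: this relies on the planar topological fact that a homeomorphism of $\RR^2$ induces a well-defined action (up to reversal) on the cyclic order of branches of a \SA curve germ. For a \SA isotopy of the plane, the conical structure near $\pp$ given by the \SA triviality theorem ensures the cyclic order is even intrinsically recoverable from the oriented circle $\partial D'$ equipped with the marked points $\qq_i'$, which makes the preservation of opposite-branch pairings rigorous without further topological machinery.
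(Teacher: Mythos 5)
Your route is essentially the one the paper delegates to its citation: the paper gives no written proof beyond pointing to \cite[Prop 1.9--10]{BZ2003}, and your two steps (injectivity of the homeomorphism $\Hunt$ for the non-intersection claim; preservation of the cyclic arrangement up to reversal, combined with the elementary observation that the next-after-next pairing of four branches is invariant under reversal) are exactly the content that citation is meant to supply. The first step and the combinatorial step are correct as written.

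The one step that does not hold as you state it is the construction of $D'$. For an arbitrary, merely continuous isotopy, the arcs $\Hunt(\gamma_i)$ need not be \SA, and a topological arc ending at $\Hunt(\pp)$ can cross every sufficiently small circle centred at $\Hunt(\pp)$ several times; so ``by continuity'' you cannot choose $D'$ so that each image branch meets $\partial D'$ exactly once, and the conic structure theorem \cite[Thm 9.3.6]{BCR1998} is not available on the image side in general (indeed, for non-\SA images the paper's circle-based definition of opposite branches must itself be reinterpreted). This is harmless in the actual application, where on both sides the arcs are \SA (straight edges of $\Ccur_{\Ggp}$ and branches of $\CcurRde$), so the conic structure applies to the images as well and your bookkeeping goes through. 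To cover the lemma as stated, replace the circle-based comparison by an intrinsic, homeomorphism-invariant characterization of the pairing: $(\gamma_1,\gamma_3)$ and $(\gamma_2,\gamma_4)$ are the opposite couples at $\pp$ if and only if, inside a small disk $D$ around $\pp$, the crosscut $(\gamma_1\cup\{\pp\}\cup\gamma_3)\cap D$ separates $\gamma_2\cap D$ from $\gamma_4\cap D$; since $\Hunt$ restricts to a homeomorphism of $D$ onto the topological disk $\Hunt(D)$, this separation property transports verbatim to the images, which yields the conclusion without choosing any round circle on the image side. With that adjustment your argument is complete and matches the intended one.
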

This property allows us to deduce relations between edges of $\Ggp$, from 
relations between the associated branches of $\CcurRde$.

\begin{figure}[h]\centering
\vspace{-0.4cm}
\begin{minipage}{0.5\linewidth}
 {\setlength{\fboxsep}{0pt}\setlength{\fboxrule}{0pt}\fbox{\includegraphics[width=\linewidth]{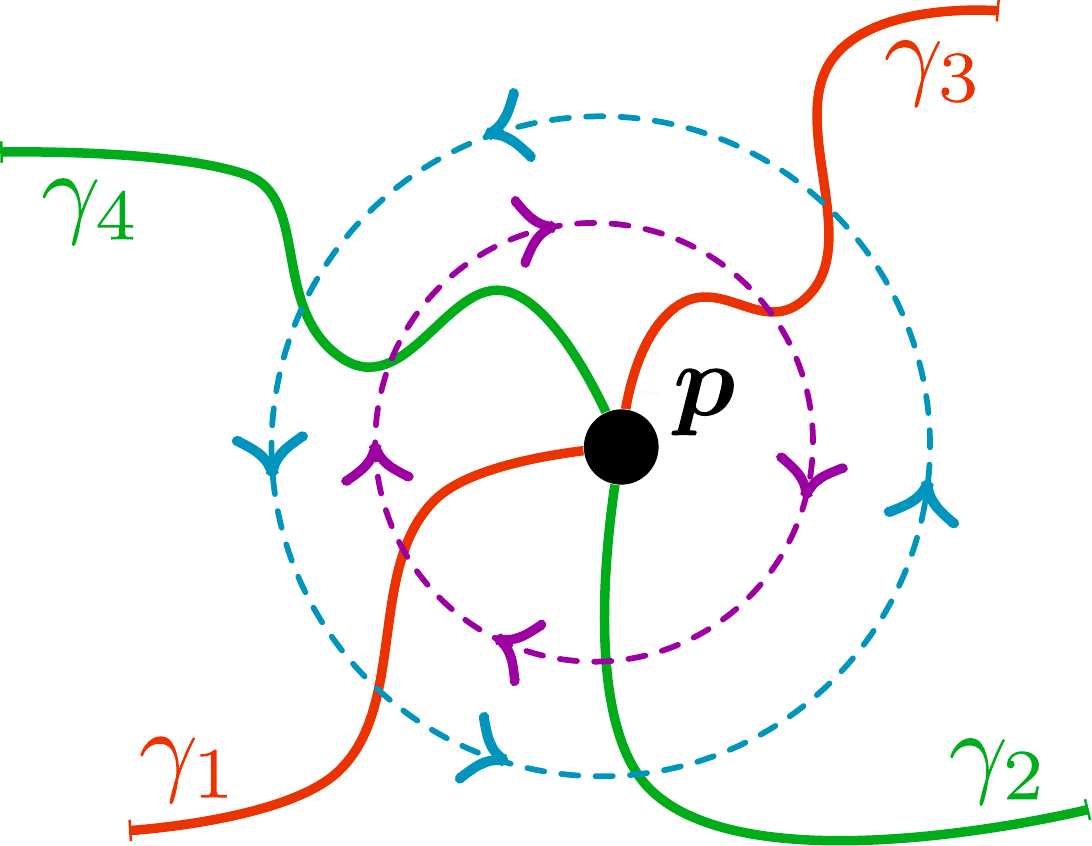}}}\end{minipage}\hspace*{0.3cm}
\begin{minipage}{0.5\linewidth}
 {\setlength{\fboxsep}{0pt}\setlength{\fboxrule}{0pt}\fbox{ \includegraphics[width=0.8\linewidth]{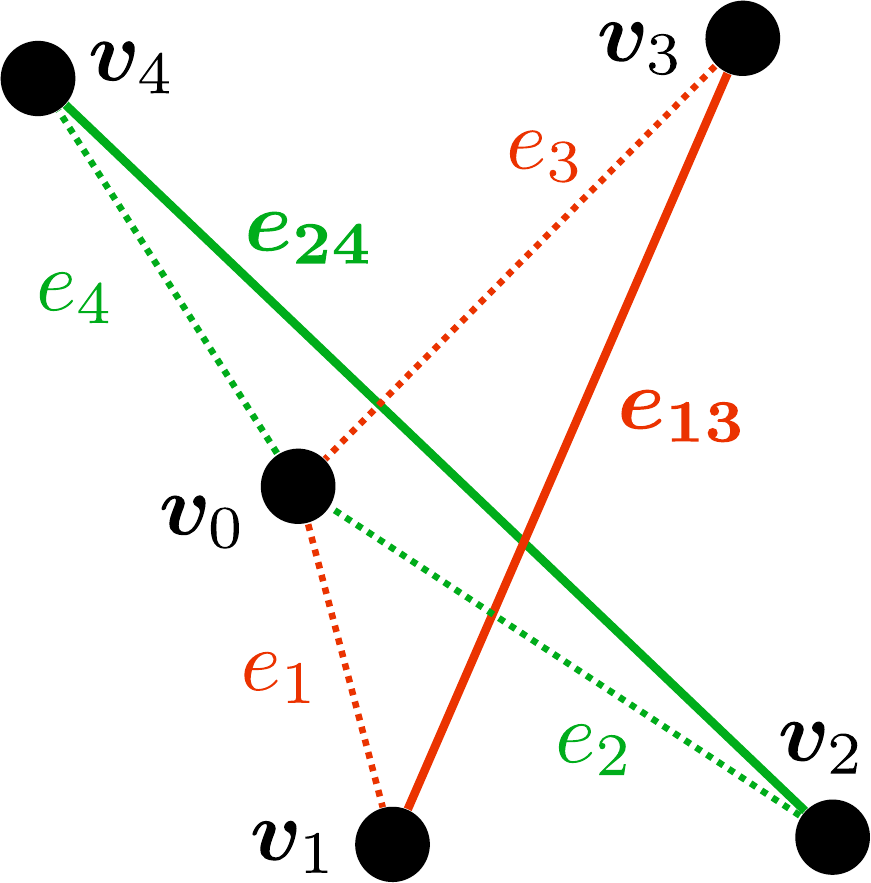}}}\end{minipage}
\caption{\small
The left figure illustrates the context of Lemma~\ref{lem:arrangebranch} 
with two possible ordering of the branches;
the braid structure appears clearly.
On the right, an illustration shows how \NodeRes (Definition~\ref{def:noderes}) 
modifying $\Ggp$ at vertices of $\Vapp$; dotted and solid lines representing 
respective edges of $\Ggp$ and 
$\Gg$.
}
 \label{fig:doublenode}
 \vspace{-0.5cm}
\end{figure}

\begin{lemma}\label{lem:vertexapp}
 Let $\yy=(\alpha,\beta) \in \app(\CcurRde)$. There are exactly five 
distinct vertices ${\bvv}_0, \dotsc, \bvv_4 \in \Vgp$ such that $\Hun(\bvv_0) = 
\yy$ 
and 
for $1\leq i \leq 4$:
\begin{enumerate}
  \item $\{\bvv_0,\bvv_i\}\in\Egp$ and $\Hun(\bvv_i)
  \notin \app(\Ccur_2)$;
  \item if $e_i=[\bvv_0,\bvv_i]$, the $e_i's$ do not cross each other except at 
  $\bvv_0$;
  \item there exists unique \SA paths $\tau_1,\dotsc,\tau_4$ such that for 
  \adrien{\[\tau_i:[0,1]\to \CcurR, \ \ \left\{
  \begin{array}{l}
  \pi_2(\tau_i([0,1])) = \Hun(e_i)\\[0.1em]
  \pi_2(\tau_i(0)) = \yy
  \end{array}\right.
  \]}
  \item assume that $(e_1,e_3)$ and $(e_2,e_4)$ 
are the two unique couples of opposite edges of $\Ggp$ at $\bvv_0$.
Then, there exist $\xx_1 \neq \xx_2$ in $\pi_2^{-1}(\yy)\cap \CcurR$, 
such that $\xx_1=\tau_1(0)=\tau_3(0)$ and $\xx_2=\tau_2(0) = \tau_4(0)$.
 \end{enumerate}
\end{lemma}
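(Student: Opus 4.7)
The plan is to extract the local picture of $\CcurRde$ at the real node $\yy$, transport it to $\Ccur_{\Ggp}$ via the isotopy $\Hiso$, then lift the resulting edges to $\CcurR$ through the map $\phi_2$ from Lemma~\ref{lem:pi2homeo}, and finally use Lemma~\ref{lem:arrangebranch} to match opposite edges with common lifts. The existence of $\bvv_0$ is immediate: $\yy\in\app(\Ccur_2)\cap\RRo^2\subset\sing(\Ccur_2)\cap\RRo^2\subset\KCcurRde$, so condition~(2) of Definition~\ref{def:topology2d} furnishes $\bvv_0\in\Vgp$ with $\Hun(\bvv_0)=\yy$. By \ref{hyp:app} and \ref{hyp:double}, $\yy$ is a node of $\Ccur_2$ with $\pi_2^{-1}(\yy)\cap\Ccur=\{\xx_1,\xx_2\}$ consisting of two distinct regular points of $\Ccur$. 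Using \ref{hyp:projtang} and the inverse function theorem at each $\xx_j\in\reg(\Ccur)$, a real neighbourhood of $\xx_j$ in $\CcurR$ is mapped by $\pi_2$ homeomorphically onto one of the two smooth transversal real branches of $\CcurRde$ at $\yy$; in particular $\xx_1,\xx_2\in\CcurR$, and $\CcurRde$ has exactly four half-branches at $\yy$. By the isotopy equivalence of Definition~\ref{def:topology2d}~(1), $\Ccur_{\Ggp}$ has exactly four half-edges at $\bvv_0$, yielding four distinct vertices $\bvv_1,\dotsc,\bvv_4\in\Vgp$ with $\{\bvv_0,\bvv_i\}\in\Egp$ (distinctness is forced by the set-of-unordered-pairs structure of $\Egp$); condition~(3) implies $\Hun(\bvv_i)\notin\KCcurRde$. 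Finally, the edges $e_i=[\bvv_0,\bvv_i]$ cannot meet off $\bvv_0$, as this would create another local singularity of $\CcurRde$ near $\yy$, contradicting the above picture.

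Next, I would lift each edge. Parameterize $e_i$ by $\gamma_i\colon[0,1]\to \Ccur_{\Ggp}$ with $\gamma_i(0)=\bvv_0$. The interior of $\Hun(e_i)$ contains no point of $\KCcurRde\supset\app(\Ccur_2)\cap\RRo^2$, hence Lemma~\ref{lem:pi2homeo} gives a continuous \SA path $\tau_i(t):=\phi_2(\Hun(\gamma_i(t)))$ in $\CcurR$ on $(0,1]$ satisfying $\pi_2\circ\tau_i=\Hun\circ\gamma_i$. Arguing exactly as in the last paragraph of the proof of Lemma~\ref{lem:pi2homeo} (using the properness of $\pi_2|_{\Ccur}$ coming from \ref{hyp:noeth}, together with \cite[Prop 3.21]{BPR2016}), $\tau_i$ extends continuously to $t=0$, and $\pi_2(\tau_i(0))=\yy$ then forces $\tau_i(0)\in\{\xx_1,\xx_2\}$. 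Uniqueness of $\tau_i$ on $(0,1]$ is inherited from that of $\phi_2$, and at $t=0$ from continuity.

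For the pairing, each smooth real arc of $\CcurRde$ through $\yy$ is, by the first paragraph, the $\pi_2$-image of a smooth real arc of $\CcurR$ through one of the $\xx_j$; hence the four half-branches of $\CcurRde$ at $\yy$ split into two pairs of opposite half-branches, the pair associated to $\xx_j$ being the one whose $\phi_2$-lift approaches $\xx_j$. Applying Lemma~\ref{lem:arrangebranch} to the isotopy $\Hiso^{-1}$ transports this structure to $\bvv_0$: opposite half-edges at $\bvv_0$ correspond to opposite half-branches at $\yy$. Therefore, given the labelling of the statement, $(\Hun(e_1),\Hun(e_3))$ and $(\Hun(e_2),\Hun(e_4))$ are the two smooth arcs through $\yy$, and continuity of the $\tau_i$'s yields $\tau_1(0)=\tau_3(0)$ and $\tau_2(0)=\tau_4(0)$, the two values being distinct by \ref{hyp:double}. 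The main obstacle is the clean invocation of Lemma~\ref{lem:arrangebranch}: one must exhibit four \SA paths based at $\yy$ fitting the hypotheses of that lemma, and verify that the restriction of $\Hiso^{-1}$ (suitably localized near $\yy$) qualifies as an ambient isotopy in its sense, so that the opposite-pair structure genuinely transfers.
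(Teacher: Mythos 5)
Your proposal follows essentially the same route as the paper's proof: extract the four incident edges at $\bvv_0$ from the node structure and Definition~\ref{def:topology2d}, lift each branch through $\phi_2$ (Lemma~\ref{lem:pi2homeo}), extend the lift to $t=0$ using properness of $\pi_2$ (from \ref{hyp:noeth}, via \cite[Prop 3.21]{BPR2016}), and transfer the opposite-pair structure between $\yy$ and $\bvv_0$ with Lemma~\ref{lem:arrangebranch}. The one point where you genuinely diverge is the final pairing: the paper matches $\tau_1(0)=\tau_3(0)$ and $\tau_2(0)=\tau_4(0)$ by a tangent argument --- the tangents of the lifted paths project, by \ref{hyp:projtang}, onto the two distinct tangent lines of the node, and the two preimage points have distinct projected tangents by \ref{hyp:node} --- whereas you identify each smooth real arc at $\yy$ with the $\pi_2$-image of a real neighbourhood of one $\xx_j$ via the inverse function theorem and let continuity of the lifts decide. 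That variant is workable, but it makes your first paragraph carry weight the paper's argument does not: the clause ``in particular $\xx_1,\xx_2\in\CcurR$'' is circular as written, since a real neighbourhood of $\xx_j$ in $\CcurR$ only exists once you know $\xx_j$ is real, and \ref{hyp:double} only gives two distinct points of $\pi_2^{-1}(\yy)\cap\Ccur$, which a priori could form a complex-conjugate pair (in which case $\yy$ would be an isolated point of $\CcurRde$). The paper obtains the endpoints as limits of the real paths $\tau_i$, so their realness is a conclusion rather than an input (though it too implicitly assumes the crossing picture when asserting four incident edges); if you keep your local-picture route, you should first argue that the four real half-branches at $\yy$ force each complex branch, hence each $\xx_j$, to be conjugation-stable. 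Your closing concern about invoking Lemma~\ref{lem:arrangebranch} through $\Hiso^{-1}$ is legitimate but minor: the inverse of an ambient isotopy is again one, and the paper applies the lemma in exactly this back-and-forth manner without further comment.
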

\begin{proof}
  Let $\bvv_0=\Hun^{-1}(\yy)$. As $\yy$ is a node, there are exactly four
  distinct vertices $\bvv_1,\dotsc,\bvv_4 \in \Vgp$ such that $\{\bvv_0,\bvv_i\}
  \in \Egp$, for $1\leq i\leq 4$. Indeed, for $1\leq i\leq 4$, let
\[
 e_i:t\in[0,1]\mapsto \bvv_0 +t(\bvv_i-\bvv_0) \in \RR^2
\]
and $\gamma_i=\Hun\circ e_i$. Then the $\gamma_i$'s are the four branches of 
$\CcurRde$ incident in $\yy$.
Remark that, by the third item of Definition~\ref{def:topology2d}, none of the 
$\Hun(\bvv_i)$'s lie in $\KCcurRde$, since $\Hun(\bvv_0) = \yy$ does. 
Besides, by the second item, the $\gamma_i$'s do not intersect $\KCcurde$, 
except in $\yy$.

In particular, the $\gamma_i's$ do not contain points of $\app(\Ccur_2)$ and
intersect each other only at $\yy$. Hence, by Lemma~\ref{lem:arrangebranch}, 
through $\Hun$, the $e_i$'s intersect each other only at $\bvv_0$.

Besides, let $i\in\{1,\dotsc,4\}$, and for $0<t\leq 1$, let $\tau_i(t) =
\phi_2(\gamma_i(t))$, where $\phi_2$ is defined in Lemma~\ref{lem:pi2homeo}.
It is a well-defined \SA path by the above discussion. 
Moreover, by Lemma~\ref{lem:pi2homeo}, 
$\tau_i(t) \in \CcurR$ and $\pi_2(\tau(t)) = \gamma(t) =
\Hun(e_i(t))$, for all $0<t\leq 1$.
Since $\pi_2$ is a proper map by \ref{hyp:noeth}, 
\cite[Prop 3.21]{BPR2016} implies that
$\tau_i$ can be continuously extended in $t=0$. Moreover, by continuity, 
$\pi_2(\tau_i(0))=\yy$.

Finally, $\yy$ being a node, there exist points $\theta_1 \neq 
\theta_2$ in $\RR^2$ and $1\leq i_1,i_2,i_3,i_4\leq 4$ such that,
\[
\theta_1=\gamma_{i_1}'(0)=\gamma_{i_3}'(0)
 \et
\theta_2=\gamma_{i_2}'(0)=\gamma_{i_4}'(0).
\]
This means that the branches $(\gamma_{i_1},\gamma_{i_3})$ and
$(\gamma_{i_2},\gamma_{i_4})$ are the two couples of opposite branches of
$\Ccur_2$ at $\yy$. Then, by Lemma~\ref{lem:arrangebranch}, $(e_{i_1},e_{i_3})$ 
and $(e_{i_2},e_{i_4})$ are the two couples of opposite edges of $\Ggp$ at 
$\yy$. For the sake of clarity assume, without loss of generality that $i_k=k$ 
for all $1\leq k \leq 4$. By continuity, there exist $\vartheta_1 \neq 
\vartheta_2$ in $\RR^n$ such that
\[
 \vartheta_1 = \tau_1'(0)=\tau_3'(0) \et \vartheta_2=\tau_2'(0)=\tau_4'(0),
\]
and $\tau_i(0) \in \pi_2^{-1}(\yy)\cap\CcurR$ for $1\leq i\leq 4$. But as $\yy 
\in\app(\Ccur_2)$, $\pi_2^{-1}(\yy)\cap\Ccur$ contains two distinct 
non-singular points, of distinct tangent lines, \remi{by \ref{hyp:projtang} and 
\ref{hyp:node}}. Since the $\tau_{i}'(0)$'s are tangent lines of $\Ccur$, 
necessarily, $\tau_1(0)$ and $\tau_3(0)$ are equal to one of these points, 
while $\tau_2(0)$ and $\tau_4(0)$ are equal to the other one 
\remi{(if multiple branches converge at a point or the tangent lines 
differ, it becomes singular).}
\end{proof}

If $\Vapp = \Hun^{-1}(\app(\Ccur_2)) \subset \Vgp$ is the subset of apparent
nodes, then Lemma~\ref{lem:vertexapp} provides a procedure to compute a new 
graph $\Gg$, from which we can deduce connectivity queries on $\Ccur$.

\begin{definition}\label{def:noderes}
Let \NodeRes be the procedure that takes as input $\Ggp$ and $\Vapp$ as above 
and outputs the graph $\Gg = (\Vg,\Eg)$ as follows (we keep notations 
of Lemma~\ref{lem:vertexapp}).
\begin{enumerate}[label=\arabic*.]
\item For all $\bvv \in \Vapp$, compute the adjacent vertices 
$\bvv_1,\dotsc,\bvv_4$  of $\bvv$, indexed such that $(e_1, e_3)$ and 
$(e_2,e_4)$ are opposite edges.
\item Remove $\bvv$ from $\Vgp$ and replace the 
four edges $(\{\bvv, \bvv_k\})_{1\leq k\leq 4}$ by the two edges $(\{\bvv_j, 
\bvv_{j+2} \})_{k=1,2}$, as depicted in Figure~\ref{fig:doublenode}.
\end{enumerate}
\end{definition}

We say that $\bvv,\bvv'\in\Vg$ are connected in a graph $\Gg=(\Vg,\Eg)$ if
there exists an ordered sequence $(\bvv_0,\dotsc,\bvv_{N+1})$ of vertices in 
$\Vg$ such that $\bvv_0=\bvv$, $\bvv_{N+1}=\bvv'$ and $\{\bvv_i,\bvv_{i+1}\} 
\in \Eg$, for all $0\leq i \leq N$.
\begin{proposition}\label{prop:algonode}
 Let $\Gg = (\Vg,\Eg)$ be the graph output by \NodeRes, on input $\Ggp$ and 
$\Vapp$. Then,
\begin{enumerate}
 \item $\pi_2(\PsetR) \subset \Hun(\Vg)$;
 \item $\yy,\yy'\in\PsetR$ are \SAC in $\CcurR$ if, and 
only if, $\Hun^{-1}(\pi_2(\yy))$ and $\Hun^{-1}(\pi_2(\yy'))$ are 
connected in $\Gg$.
\end{enumerate}
\end{proposition}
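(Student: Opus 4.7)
The plan is to prove the two items in order, leveraging the local analysis at apparent nodes from Lemma~\ref{lem:vertexapp} and the global homeomorphism away from them from Lemma~\ref{lem:pi2homeo}.

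For item 1, I would first observe that, since $\Pset\subset\reg(\Ccur)$, assumption \ref{hyp:nosingsecant} gives $\pi_2^{-1}(\pi_2(\xx))\cap\Ccur=\{\xx\}$ for every $\xx\in\Pset$. By \ref{hyp:app}${}'$, each apparent node has exactly two preimages in $\Ccur$, hence $\Pset_2\cap\app(\Ccur_2)=\emptyset$. By item 2 of Definition~\ref{def:topology2d}, the points of $\PsetRde$ are embedded in $\Vgp$; since \NodeRes only removes vertices of $\Vapp$, the corresponding vertices survive in $\Vg$, so $\pi_2(\PsetR)=\PsetRde\cap\CcurRde\subset\Hun(\Vg)$.

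For item 2, I would treat the two implications separately. \textbf{($\Leftarrow$)} Given a path $(\bvv^{(0)},\dotsc,\bvv^{(N+1)})$ in $\Gg$ connecting $\Hun^{-1}(\pi_2(\yy))$ to $\Hun^{-1}(\pi_2(\yy'))$, I lift each edge $\{\bvv^{(i)},\bvv^{(i+1)}\}\in\Eg$ to a \SA arc in $\CcurR$. If the edge is inherited from $\Egp$, then by construction it does not meet $\Vapp$, so $\Hun$ maps the segment $[\bvv^{(i)},\bvv^{(i+1)}]$ to an arc of $\CcurRde\setminus\app(\Ccur_2)$, which lifts through the \SA homeomorphism $\phi_2$ of Lemma~\ref{lem:pi2homeo}. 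If the edge is one of the new edges produced by \NodeRes at some $\bvv\in\Vapp$, then by item 4 of Lemma~\ref{lem:vertexapp} the two paths $\tau_j,\tau_{j+2}$ associated with the opposite edges $e_j,e_{j+2}$ glue at a common preimage $\xx_k\in\pi_2^{-1}(\Hun(\bvv))\cap\CcurR$, yielding a \SA arc in $\CcurR$ joining the lifts of $\bvv^{(i)}$ and $\bvv^{(i+1)}$. Concatenating these arcs gives a \SA path in $\CcurR$ from $\yy$ to $\yy'$.

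\textbf{($\Rightarrow$)} Given a \SA path $\gamma\colon[0,1]\to\CcurR$ with $\gamma(0)=\yy$, $\gamma(1)=\yy'$, I would project to $\tilde\gamma=\pi_2\circ\gamma$, a \SA path in $\CcurRde$, and then pull back through $\Hun^{-1}$ to a \SA path in $\Ccur_{\Ggp}$. Up to a small \SA perturbation this path can be assumed to cross $\Vgp$ only finitely many times and only transversally, so it yields an edge-walk in $\Ggp$. At every passage through a vertex $\bvv\notin\Vapp$, the corresponding edge step is already an edge of $\Gg$. At each passage through a vertex $\bvv\in\Vapp$, the key point is that Lemma~\ref{lem:vertexapp} item~4 identifies the two upstairs preimages $\xx_1,\xx_2$ with the two pairs of opposite edges $(e_1,e_3)$ and $(e_2,e_4)$: since $\gamma$ is continuous in $\CcurR$ and approaches a unique $\xx_k$ near that instant, its entry and exit branches must come from the same opposite pair. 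This is exactly the pairing preserved by \NodeRes, so the edge step lies in $\Eg$. Concatenating yields a walk in $\Gg$ between $\Hun^{-1}(\pi_2(\yy))$ and $\Hun^{-1}(\pi_2(\yy'))$.

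The hard part will be the backward implication, specifically justifying that a generic \SA path $\gamma$ in $\CcurR$ can be assumed to cross $\pi_2^{-1}(\app(\Ccur_2))$ only at isolated parameter values and that its local behaviour at such crossings is genuinely captured by opposite-pair matching of Lemma~\ref{lem:arrangebranch}. The pairing uniqueness argument, combined with the finiteness of $\app(\Ccur_2)$ and the fact that $\phi_2$ is a \SA homeomorphism away from these nodes, is what turns the local topological statement into a global graph-theoretic equivalence.
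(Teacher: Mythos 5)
Your proposal is correct and follows essentially the same route as the paper's proof: item~1 via the genericity assumptions ensuring $\Pset_2$ avoids $\app(\Ccur_2)$, the graph-to-curve direction by lifting edges of $\Egp$ through $\phi_2$ (Lemma~\ref{lem:pi2homeo}) and the new edges through the glued paths of Lemma~\ref{lem:vertexapp}, and the curve-to-graph direction by projecting the path, recording the finitely many vertex crossings, and matching entry/exit branches at apparent nodes by the opposite-pair correspondence of Lemma~\ref{lem:vertexapp}. The only cosmetic difference is your ``small \SA perturbation/transversality'' step: the projected path is constrained to lie on $\CcurRde$, so no perturbation is possible or needed; one argues instead (as the paper implicitly does) that the \SA preimage of the finite set $\Hun(\Vgp)$ is a finite union of points and intervals on which the path is constant, which yields the finite crossing sequence directly.
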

\begin{proof}
\ref{hyp:inj} and \ref{hyp:nosingsecant} 
imply $\pi_2(\Pset) \cap \Hun(\Vapp) = \emptyset$. 
Then $\PsetRde = \pi_2(\PsetR)$ as $\pi_2$ is injective on $\Pset$, and, by 
definition, $\PsetRde \subset \Vg$.

We now deal with the second statement. Let $\xx,\xx' \in \PsetR$ and
\[
\bvv = \Hun^{-1}(\pi_2(\xx)) \et \bvv' = \Hun^{-1}(\pi_2(\xx'))
\] in $\Vg$.
Assume first that $\bvv$ and $\bvv'$ are connected in $\Gg$. Then there 
exist $\bvv_1,\dotsc,\bvv_N \in \Vg$ such that, if $\bvv_0=\bvv$ and 
$\bvv_{N+1}=\bvv'$, then $\{\bvv_i,\bvv_{i+1}\} \in \Eg$ and $\Hun(\bvv_i) 
\notin \app(\Ccur_2)$ for $0\leq i \leq N+1$.
Fix $i \in \{0,\dotsc,N\}$. 
By Lemma~\ref{lem:pi2homeo}, $\xx_i=\phi_2(\Hun(\bvv_i))$ and 
$\xx_{i+1}=\phi_2(\Hun(\bvv_{i+1}))$ are well-defined in $\CcurR$ 

If $\{\bvv_i,\bvv_{i+1}\} \in \Egp$ then, $\Hun([\bvv_i,\bvv_{i+1}]) \cap
\app(\Ccur_2) = \emptyset$, and, by Lemma~\ref{lem:pi2homeo}, $\xx_i$ and
$\xx_{i+1}$ are \SAC in $\CcurR$ through $\phi_2$. 
Otherwise, $\{\bvv_i,\bvv_{i+1}\} \notin \Egp$, and, by construction of $\Gg$, 
there exists $\ww \in \Vapp$ such that $\{\bvv_i,\ww\}$ and 
$\{\ww,\bvv_{i+1}\}$ are in $\Egp$. However, since $\{\bvv_i,\bvv_{i+1}\} \in 
\Eg$, then, according to the construction of $\Gg$, 
$$e_i=[\ww,\bvv_i] \et e_{i+1}=[\ww,\bvv_{i+1}]$$
are opposite edges of $\Ggp$ at $\ww$. Hence, by 
items \emph{(2)} and \emph{(3)} of Lemma~\ref{lem:vertexapp}, there exists a 
\SA path 
$\tau:[-1,1]\to\CcurR$ connecting $\xx_i$ to $\xx_{i+1}$. All in all, by 
transitivity, $\xx_0=\xx$ and $\xx_{N+1}=\xx'$ are \SAC in $\CcurR$, and we are 
done.

Conversely, suppose that $\xx$ and $\xx'$ are \SAC in $\CcurR$ and let
$\tau:[0,1]\to\CcurR$ be a \SA path such that $\tau(0)=\xx$ and $\tau(1)=\xx'$.
Let $\gamma = \pi_2 \circ \tau$, and 
\[
\{t_1,\dotsc,t_N\} = \gamma^{-1}\big(\Hun(\Vgp)\big) \subset (0,1)
\]
such that $t_1 < \dotsc < t_N$.
Let $t_0 = 0$, $t_{N+1}=1$ and for $0\leq i\leq N+1$, $\bvv_i = 
\Hun^{-1}(\gamma(t_i)) \in \Vgp$. By assumption, $\{\bvv_i,\bvv_{i+1}\} \in 
\Egp$ for all $i\in\{0,\dotsc,N\}$.
Let us prove by induction that for $0\leq i\leq N+1$, either $\bvv_i 
\in \Vapp$ or $\bvv_i$ is connected to $\bvv_0$ in $\Gg$. 
If $i=0$, there is nothing to prove, so let $1\leq i\leq N$ and suppose that 
the statement holds for all $0\leq j\adrien{<} i$. 

Assume $\bvv_{i+1} \notin \Vapp$. Then, either $\bvv_i \notin \Vapp$, and, 
by induction hypothesis, $\bvv_{i+1}$ and $\bvv_0$ are connected, through 
$\bvv_i$, in $\Gg$.
Either $\bvv_i \in \Vapp$ and, by Lemma~\ref{lem:vertexapp}, there
are exactly four distinct $\ww_1,\ww_2,\ww_3,\ww_4 \in \Vg-\Vapp$ such that
$\{\bvv_i,\ww_j\} \in \Egp$, for $1\leq j \leq 4$. Assume, without loss of
generality, that $\bvv_{i+1} = \ww_1$. Then, there is $j_1\in\{2,3,4\}$ such
that $\bvv_{i-1}=\ww_{j_1}$. Using the notation of Lemma~\ref{lem:vertexapp},
assume, without loss of generality, that $e_3=[\bvv_i,\ww_3]$ is the opposite
branch of $e_1=[\bvv_i,\ww_1]$ in $\Ggp$ at $\bvv_i$. Then, by items \emph{(2)} 
and \emph{(3)} of Lemma~\ref{lem:vertexapp}, we have 
$j_0=3$, since $\tau([t_{i-1},t_i])$ is connected to $\tau([t_i,t_{i+1}])$. By 
construction of $\Gg$, $\ww_1=\bvv_{i+1}$ is connected to $\ww_3=\bvv_{i-1}$ in 
$\Gg$, so that, by induction, $\bvv_{i+1}$ is connected to 
$\bvv_0$, through $\bvv_{i-1}$. Hence, $\bvv = \bvv_{N+1}$ and
$\bvv'=\bvv_{0}$ are connected in $\Gg$, proving the converse.
\end{proof}\vspace*{-0.3em}
\review{Proposition~\ref{prop:algonode} also implies that $\Gg$ and $\CcurR$ 
share 
the same number of \SACCs. Therefore, by computing $\Gg$, one can determine 
this 
number and answer connectivity queries on $\PsetR$.}

\section{Algorithm}
We now provide an algorithm for solving connectivity queries over real algebraic
curves, whose different steps correspond sequentially, except for one, to the
different sections of this document.

Given a sequence of polynomials defining an algebraic curve, the first step
is to perform a linear change of variable, generic enough to ensure assumption
$\Hspace{}$, and to compute a \ODP encoding it. Answering connectivity queries 
on
the sheared curve is equivalent to do so on the original curve. By \cite[Thm
6.18]{GM2019} (or \cite[Prop 6.3]{SS2017}), computing such a
parametrization has complexity cubic in the degree of the curve, thus bounded
by our overall complexity. Besides, according to \cite[\S\,J]{SS2017}, changing 
variables in zero and \ODPs has similar complexity. 
Hence, for the sake of clarity, we omit these two steps.

Following the state of the art of curve topology computation, we consider
polynomials with integer coefficients, so that $\KK=\QQo$, $\RR=\RRo$ and
$\KKbar=\CCo$. Moreover, we denote by $\infx$ the preorder on points of 
$\RRo^n$ w.r.t. the first coordinate, when they are distinct.

\subsection{Subroutines}
We assume that $\Rparam=(\omega,\rho_3,\dotsc,\rho_n)$
has coefficients in $\ZZo$ and magnitude $(\delta,\tau)$,
and consider a \ZDP $\Pparam = (\lambda, \vtheta_2, \dotsc, 
\vtheta_n)$, with coefficients in $\ZZo$ and magnitude $(\deltaP,\tauP)$ 
encoding $\Pset$. 
Note that $\Dparam = (\omega,\rho_2)$ and $\Qparam = (\lambda,\vtheta_2)$ 
are parametrizations encoding respectively $\Ccur_2$ and $\Pset_2$. We denote 
further $R = \Res_{x_2}(\omega,\dy{\omega})$. Since, by \ref{hyp:odp}, $\omega$ 
is monic in $x_2$, its roots are exactly the abscissas of $\KCcurde$.
From \ref{hyp:inj}, points of $\app(\Ccur_2)$ can be identified by their 
abscissa, \adrien{which}, following Proposition~\ref{prop:sing-node}, can be 
reduced to 
gcd computations.
\begin{proposition}\label{prop:procappsing}
 There exists an algorithm \AppSing taking as input $\Rparam$, as above, and 
computing a square-free polynomial $\qapp \in \ZZo[x_1]$, of magnitude 
$(\delta^2, \Otilde(\delta^2+\delta\tau))$ such that
\[
 \app(\Ccur_2) = \{ (\alpha,\beta) \in \KCcurde \mid \qapp(\alpha)=0\},
\]
using $\Otilde(\delta^6+\delta^5\tau)$ bit operations.
\end{proposition}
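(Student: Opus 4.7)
The plan is to turn the algebraic criterion of Proposition~\ref{prop:sing-node} into univariate computations. Since $\pi_1$ is injective on $\KCcurde$ by $\Hspace{5}$, an apparent singularity is uniquely determined by its abscissa, so it suffices to build a square-free polynomial $\qapp \in \ZZo[x_1]$ whose roots are precisely $\pi_1(\app(\Ccur_2))$.

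First, I would compute $R = \Res_{x_2}(\omega, \dy{\omega}) \in \ZZo[x_1]$ together with its subresultant sequence in $x_2$. Classical bounds give $\deg R \leq \delta^2$ and coefficient bit size $\Otilde(\delta\tau)$, and the roots of $R$ are the abscissas of $\KCcurde$. From the subresultant sequence and $\dx{\omega}$, I would extract, by standard gcd manipulations, a square-free factor $A \in \ZZo[x_1]$ of $R$ whose roots are exactly the abscissas of $\sing(\Ccur_2)$, together with a rational parametrization $x_2 \equiv \sigma(x_1)/C(x_1) \pmod{A}$ of the corresponding ordinate (well defined thanks to $\Hspace{5}$), with $\sigma, C \in \ZZo[x_1]$ of degree less than $\deg A$ and bit size $\Otilde(\delta\tau)$.

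Next, set
\[
  N = \dyy{\omega} \cdot \dx{\rho_3} - \dxy{\omega} \cdot \dy{\rho_3}, \quad
  \Delta = (\dxy{\omega})^2 - \dxx{\omega} \cdot \dyy{\omega}.
\]
The proof of Proposition~\ref{prop:sing-node} shows that a singular point $\yy$ is a node precisely when $\Delta(\yy) \neq 0$ (the tangent quadratic $\dxx{\omega} + 2U\,\dxy{\omega} + U^2\,\dyy{\omega}$ has distinct roots), and is apparent exactly when in addition $N(\yy) \neq 0$. Substituting $x_2 = \sigma/C$ and reducing modulo $A$ yields $\bar{N}, \bar{\Delta} \in \ZZo[x_1]$, and I would return
\[
  \qapp = A \,\big/\, \gcd(A, \bar{N}\cdot\bar{\Delta}).
\]
By construction, $\qapp$ is square-free, of degree at most $\delta^2$, and of bit size $\Otilde(\delta^2 + \delta\tau)$ by the usual height bounds on subresultants, gcds and exact divisions.

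The main obstacle, and the step dominating the overall cost, will be the modular substitution producing $\bar{N}$ and $\bar{\Delta}$: inserting a rational function of degree $O(\delta^2)$ and bit size $\Otilde(\delta\tau)$ into bivariate polynomials of bidegree $O(\delta)$ and reducing modulo $A$. Using fast univariate arithmetic, this amounts to $O(\delta)$ multiplications and reductions of univariate polynomials of degree $O(\delta^2)$ with intermediate bit size up to $\Otilde(\delta^2 + \delta\tau)$, for a total cost of $\Otilde(\delta^6 + \delta^5\tau)$. All preceding resultant, subresultant and gcd computations fit within this bound.
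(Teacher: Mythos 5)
Your proposal follows the same overall strategy as the paper — reduce the criterion of Proposition~\ref{prop:sing-node} to univariate gcd computations along the critical fiber, using \ref{hyp:inj} so that apparent singularities are identified by their abscissas — but it differs at the node-detection step. The paper never computes the singular abscissas or the Hessian: it invokes \cite[Thm 3.2.(ii)]{Ka2008}, by which, in generic position, $(\alpha,\beta)$ is a node iff $\alpha$ is a \emph{double} root of $R$, so the candidate abscissas are extracted purely by the gcd formula $q=\gcd(R^*,R')/\gcd(R^*,R',R'')$; it then clears denominators once via the homogenization $B=\Ahom(x_1,-\sr_{1,0},\sr_1)$ of the polynomial in \eqref{eq:ineq} and takes a single gcd with $q$, which is what keeps the cost at $\Otilde(\delta^6+\delta^5\tau)$. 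Your alternative — isolate the abscissas of $\sing(\Ccur_2)$ using $\dx{\omega}$ and select nodes by the discriminant $\Delta\neq 0$ — is mathematically sound (though note that ``singular and $\Delta\neq0$ iff node'' is the classical tangent-cone characterization, not something actually proved in Proposition~\ref{prop:sing-node}, whose proof goes the other way), and your degree/bitsize/complexity estimates are of the right order.

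The one step whose justification would fail as written is the rational parametrization $x_2\equiv\sigma(x_1)/C(x_1)\bmod A$ over \emph{all} singular abscissas, ``well defined thanks to \ref{hyp:inj}''. Assumption \ref{hyp:inj} only gives one critical point per critical abscissa; it does not make the first subresultant coefficient nonzero there. At non-nodal singular points of $\Ccur_2$ (which do occur: the singularities of $\Ccur$ project to arbitrary plane singularities), $\gcd(\omega(\alpha,\cdot),\dy{\omega}(\alpha,\cdot))$ can have degree $\geq 2$, $\sr_1(\alpha)=0$, and the formula $\beta=-\sr_{1,0}(\alpha)/\sr_1(\alpha)$ breaks down; likewise your denominator $C$ need not be invertible modulo all of $A$. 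To repair this you must stratify: split $A$ according to the degree $k$ of that gcd and parametrize each factor with the $k$-th subresultant, which is legitimate because \ref{hyp:inj} and \ref{hyp:odp} put $\Ccur_2$ in generic position in the sense of \cite[Thm 3.1]{GK1996} — this stratified parametrization is exactly what the paper uses in Lemma~\ref{lem:realcrit}, and its cost fits in the stated bound. The paper's restriction to double roots of $R$ sidesteps the issue entirely, since \cite[Thm 3.1]{GK1996} guarantees $\sr_1(\alpha)\neq0$ precisely there; if you adopt the stratified parametrization (and clear denominators globally before reducing, as you implicitly do for $\bar N$ and $\bar\Delta$, to control coefficient growth in the modular Horner scheme), your variant goes through with the same complexity.
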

\begin{proof}
 Let $(\alpha,\beta)\in\KCcurde$.
 According to \cite[Thm 3.2.(ii)]{Ka2008}, since $\Ccur$ satisfies 
$\Hspace{}$, $(\alpha,\beta)$ is a node if, and only if, $\alpha$ is a double 
root of $R$, i.e. if, and only if, $\alpha$ 
is a root of
\[
 q = \gcd(R^*,R') / \gcd(R^*,R',R''),
\]
where $R^*$ is the square-free part of $R$.
Moreover,  let $(\sr_1,\sr_{1,0})$ be the first subresultant sequence of 
$(\omega,\dy{\omega})$. By \cite[Thm 3.1]{GK1996}, if $q(\alpha)=0$ then, 
$\sr_1(\alpha) \neq 0$, and \[
 \sr_1(\alpha)\cdot \beta = -\sr_{1,0}(\alpha).
\]

Let $A(x_1,x_2)$ be the polynomial on the left-hand side of \eqref{eq:ineq} in 
Proposition~\ref{prop:sing-node}, and $u$ be a new indeterminate.
Let $\Ahom(x_1,x_2,u)$ be the homogenization of $A$ in $x_2$, and 
$B=\Ahom(x_1,-\sr_{1,0}, \sr_1)$.
Then, from Proposition~\ref{prop:sing-node}, the square-free polynomial 
\[
 \qapp=q/\gcd(q,B)
\] vanishes at $\alpha$ if, and only if, $(\alpha,\beta)\in\app(\Ccur_2)$, as 
required.

We now deal with the quantitative bounds.
By \cite[Lemma 14]{MSW2015}, $R$, $R^*$, $\sr_1$ and 
$\sr_{1,0}$ have magnitude $(\delta^2,\Otilde(\delta^2+\delta\tau))$ and can be 
computed using $\Otilde(\delta^6+\delta^5\tau)$ bit operations.
Hence, by \cite[Cor 11.14]{VG2013} and \cite[Lemma 12]{MSW2015}, 
computing $\gcd(R^*,R')$, $\gcd(R^*,R',R'')$ and then $q$ can be done using 
$\Otilde(\delta^4+\delta^3\tau)$ bit operations.
Moreover, by \cite[Lemma 11]{MSW2015}, $q$ has magnitude $(\delta^2, 
\Otilde(\delta^2+\delta \tau))$.

Besides, $\Ahom$ has magnitude $(O(\delta),\Otilde(\tau))$, so that 
$B$ has magnitude $$\big(\Otilde(\delta^3), 
\Otilde(\delta^3+\delta^2\tau)\big).$$
Hence, by \cite[Cor 11.14]{VG2013} computing, $\gcd(q,B)$ 
requires $\Otilde(\delta^{6} + \delta^{5}\tau)$ bit operations. 
From this, computing $\qapp$ costs $\Otilde(\delta^4+\delta^3\tau)$ bit 
operations, by \cite[Prop 2.15]{DDRRS2022}.
Finally, $\qapp$ has magnitude $(\delta^2, \Otilde(\delta^2+\delta 
\tau))$, by \cite[Lemma 11]{MSW2015}.
\end{proof}
\vspace*{-0.2em}
Suppose now that the polynomial $\qapp$, from 
Proposition~\ref{prop:procappsing}, has been computed. 
We can compute a real topology graph of $(\Ccur_2,\Pset_2)$, while identifying 
the vertices corresponding to $\app(\Ccur_2)$ and $\Pset_2$.
\begin{proposition}\label{prop:topoplane}
There exists an algorithm \TopoPlane taking as input $\Rparam$, $\Qparam$ 
and $\qapp$ as above and computing $\Gg = (\Vg,\Eg)$, a real topology graph of 
$(\Ccur_2, \Pset_2)$, of size at most $O(\delta^3+\delta\deltaP)$, using 
\[
\Otilde(\delta^6 + \delta^5\tau +  \deltaP^6 + \deltaP^5\tauP)
\]
bit operations.
It also outputs ordered sequences $\Vseqapp$ and $\VseqPparam$, of elements of 
$\Vg$, that are in one-to-one correspondence with resp. the 
points of $\app(\CcurRode)$ and $\PsetRode$, ordered with respect to $\infx$.
\end{proposition}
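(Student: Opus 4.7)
The plan is to build the graph in two stages: first compute a real topology graph of the plane curve $\Ccur_2$ alone (from $\Dparam=(\omega,\rho_2)$) using a state-of-the-art topology algorithm, then enrich it with the data of $\app(\CcurRode)$ and $\PsetRode$.

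First, I apply the plane topology algorithm of \cite{DDRRS2022} (or equivalently \cite{KS2015}) to $\Dparam$. This produces a graph $\Gg_0=(\Vg_0,\Eg_0)$ whose piecewise linear realization $\Ccur_{\Gg_0}$ is isotopy equivalent to $\CcurRode$, whose vertex set contains (an embedding of) $\KCcurRode$, and whose size is $O(\delta^3)$. The cost is $\Otilde(\delta^6+\delta^5\tau)$ bit operations. Since such algorithms proceed by a vertical decomposition at the abscissas of the critical points and connect consecutive vertical fibers, the vertices of $\Gg_0$ are already sorted along $\infx$ and two critical vertices are never connected by a single edge (they are separated by regular vertices in intermediate fibers), so property (3) of Definition~\ref{def:topology2d} holds for free, or can be enforced by a cheap subdivision otherwise.

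Second, I identify the subsequence $\Vseqapp$ inside $\Vg_0$ corresponding to $\app(\CcurRode)$. The x-coordinates of $\KCcurRode$ are the real roots of $R = \Res_{x_2}(\omega,\dy{\omega})$, which were already isolated during the topology computation. Using $\qapp$ provided on input (Proposition~\ref{prop:procappsing}), and by \ref{hyp:inj} together with Proposition~\ref{prop:sing-node}, it suffices to test, for each critical abscissa $\alpha$, whether $\qapp(\alpha)=0$; this reduces to univariate gcd/sign computations on polynomials of magnitude $(\delta^2,\Otilde(\delta^2+\delta\tau))$, costing $\Otilde(\delta^4+\delta^3\tau)$ bit operations. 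The output $\Vseqapp$ is obtained by selecting, in the $\infx$-ordered list of critical vertices, those whose abscissa is a root of $\qapp$.

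Third, I insert the points of $\PsetRode$ as new vertices. Isolating the real roots of the univariate polynomial underlying $\Qparam$ and evaluating the associated coordinates yields boxes for the elements of $\PsetRode$, ordered by $\infx$, in $\Otilde(\deltaP^6+\deltaP^5\tauP)$ bit operations (using the standard routines underlying \cite{DDRRS2022,MSW2015}). By \ref{hyp:inj} and \ref{hyp:nosingsecant}, every $\yy\in\PsetRode$ lies in $\reg(\Ccur_2)\setminus\KCcurRde$, and its abscissa is distinct from every abscissa of $\KCcurRde$; locating $\yy$ therefore amounts to finding the $\infx$-strip of $\Gg_0$ containing it and the unique arc of $\Ccur_2$ above this abscissa carrying $\yy$. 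This latter step is a merging pass between the two already ordered abscissa lists together with sign evaluations, contributing at most $\Otilde(\delta\deltaP)$ additional operations per point; once the host edge $e$ is known, I replace $e$ by two edges through a new vertex embedded at $\yy$. The sequence $\VseqPparam$ is produced during this merge in $\infx$-order.

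The main difficulty is bookkeeping: the embedded vertices of $\Gg_0$ are described by isolating boxes (not exact coordinates), so locating a point of $\PsetRode$ in the right strip and on the right arc needs separating the abscissas of $\KCcurRde$ and $\PsetRode$, then, within the correct strip, comparing the ordinate of $\yy$ to those of the arcs of $\Ccur_2$ cut out by the vertical fiber. Both comparisons reduce to sign evaluations of polynomials of controlled magnitude, and remain within the announced budget. Summing the three stages yields the claimed bit complexity $\Otilde(\delta^6+\delta^5\tau+\deltaP^6+\deltaP^5\tauP)$, while the size bound $O(\delta^3+\delta\deltaP)$ follows from $|\Vg_0|=O(\delta^3)$ and at most $\deltaP$ vertex insertions, each increasing the size by $O(1)$, plus, possibly, the $O(\delta\deltaP)$ contribution if several points of $\PsetRode$ share an edge of $\Gg_0$.
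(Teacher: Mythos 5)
Your proposal is correct in substance and follows the same backbone as the paper: run the plane-topology algorithm of \cite{DDRRS2022} (or \cite{KS2015}) on the parametrization of $\Ccur_2$, extract $\Vseqapp$ by matching the critical abscissas (roots of $R$) against the roots of $\qapp$, and then account for the control points $\PsetRode$, using \ref{hyp:inj} to guarantee that all relevant abscissas are pairwise distinct. The one place where you genuinely diverge is the treatment of $\PsetRode$: you post-process the finished graph by locating each control point in its vertical strip, comparing its ordinate to the arcs of $\Ccur_2$ over that abscissa, and subdividing the host edge, whereas the paper stays inside the sweep and simply adds, for each root $\beta$ of $\lambda$, a full fiber obtained by isolating the roots of $(\omega\cdot g)(\beta,x_2)$ with $g=\lambda'\,x_2-\vtheta_2$, identifying the unique common root, and connecting it trivially (one branch on each side, since control points are regular and non-critical). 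Both routes work; yours yields a slightly smaller graph ($O(\deltaP)$ extra vertices instead of $O(\delta\deltaP)$), while the paper's buys a rigorous complexity bound for free, since the fiber isolation is exactly \cite[Prop 3.14]{DDRRS2022} and comes with the stated $\Otilde(\delta^6+\delta^5\tau+\deltaP^6+\deltaP^5\tauP)$ cost. The weak point of your write-up is precisely there: the claim that comparing the ordinate $\vtheta_2(\beta)/\lambda'(\beta)$ (an algebraic number of degree up to $\deltaP$) with the roots of $\omega(\beta,\cdot)$ costs only $\Otilde(\delta\deltaP)$ per point via ``sign evaluations'' is not substantiated and is almost certainly too optimistic as stated; to make this step rigorous you would invoke the same fiber-isolation result the paper uses (or an equivalent bound on sign evaluation at algebraic points), after which your total still fits in the announced budget. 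Also note that your final size accounting is slightly muddled (edge subdivision adds $O(\deltaP)$, not $O(\delta\deltaP)$), though the claimed bound $O(\delta^3+\delta\deltaP)$ is of course still valid.
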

\begin{proof}
According to \cite[Thm 14]{KS2015}, and more recently \cite[Thm 
1.1]{DDRRS2022}, there is an algorithm that computes a planar graph $\Gg$, 
whose associated piecewise linear curve $\Ccur_{\Gg}$, is isotopy equivalent 
to 
$\CcurRode$, using $\Otilde(\delta^6 + \delta^5\tau)$ bit operations.
Under slight modifications, these algorithms can compute the 
claimed output of $\TopoPlane$, within the same complexity bounds.
For clarity, we only consider the algorithm of \cite{DDRRS2022}, 
that we roughly describe.

Let $\alpha_1<\cdots<\alpha_N$ be the abscissas of the points of $\KCcurRode$.
They are distinct by \ref{hyp:inj}.
\cite[Prop 2.24]{DDRRS2022} first computes disjoint isolating intervals for 
each 
$\alpha_i$.
Then, \cite[Prop 3.13]{DDRRS2022} isolates the ordinates of the points above 
each $\alpha_i$. 
This process gives rise to isolating boxes, which stand for vertices in the 
final graph.
The algorithm eventually connects these boxes to separating vertices above 
regular values in the intervals $(\alpha_j,\alpha_{j+1})$. The latter is 
done by counting the number of incoming left and right branches in each box.
For points of $\KCcurRode$, it is tackled by \cite[\S 4.2-4]{DDRRS2022}, 
while for others it is straightforward (exactly one branch 
from each side).

The above process computes a graph $\Gg=(\Vg,\Eg)$, such that $\Ccur_{\Gg}$ is 
isotopy equivalent to $\CcurRode$.
Remark that $\Vg$ contains a subset $\Vcrit$ of vertices associated to the 
unique point of $\KCcurRode$ above the $\alpha_i$'s, all separated by vertices 
associated to regular points.
Moreover, by Proposition~\ref{prop:procappsing}, $\Vapp$ is exactly the subset 
of $\Vcrit$, associated to the $\alpha_i$'s where $\qapp$ vanishes.   
Then, according to \cite[Prop 2.24]{DDRRS2022} and 
Proposition~\ref{prop:procappsing}, one can compute disjoint isolating 
intervals of the roots of $R$ and $\qapp$ and identify all common roots, using
\[
 \Otilde(\delta^6 + \delta^5\tau)\]
bit operations. This gives $\Vseqapp$.

Hence, it remains to show that introducing vertices for control points 
$\PsetRode$ (together with those above and below) can be done in the claimed 
bound.
First, recall that $\Dcal=(\lambda, \vtheta_2)$ encodes $\Pset_2$. According to 
\cite[Prop 2.24]{DDRRS2022} again, we can compute disjoint isolating 
intervals for all distinct (by \ref{hyp:inj}) real roots of $\lambda$ and $R$, 
using at most
\[
 \Otilde(\delta^6+ \delta^5\tau + \deltaP^6 + \deltaP^5\tauP)\]
bit operations.
Next, let $g(x_1,x_2) = \lambda'\cdot x_2-\vtheta_2$. It is a bivariate 
polynomial with magnitude $(\deltaP,\tauP)$.
Then, according to \cite[Prop 3.14]{DDRRS2022}, for each root $\beta$ of 
$\lambda$, we can compute isolating intervals for all roots $\xx_2$ of 
$(\omega\cdot g)(\beta,\xx_2)$, and identify the unique common roots, within 
the same complexity bound. This gives $\VseqPparam$.
Moreover, since $\Pset\cap\KCcurRode = \emptyset$, as seen above, the 
connection step for the introduced vertices is straightforward, and does not 
affect the complexity bound.

Finally, since we consider at most $\delta^2+\deltaP$ fibers, each of them 
containing at most $\delta$ points then, taking in account the regular 
separating fibers, we get at most $O(\delta^3+\delta\deltaP)$ vertices and 
edges.
\end{proof}\vspace*{-1em}

\subsection{The algorithm}
Let \IndexConComp be an algorithm taking as input a graph $\Gg=(\Vg,\Eg)$, and 
an ordered sequence $\Vseq=(\bvv_1,\dotsc,\bvv_N)$ of vertices of $\Gg$. 
It outputs a partition $I_1,\dotsc,I_s$ of $\{1,\dotsc,N\}$, grouping the 
indices of the $\bvv_i$'s lying in the same connected components of $\Gg$.
By \cite[\S 22.2]{CLRS2009}, this has a bit complexity linear 
in the size of $\Gg$.
\begin{algorithm}[H]
\caption{\ConnectCurve}\label{alg:connectcurve}
 \begin{algorithmic}[1] \Require $\Cparam = (\omega,\rho_3,\dotsc,\rho_n) \subset \ZZo[x_1,x_2]$ 
encoding an algebraic curve $\Ccur \subset \CCo^n$ and $\Pparam = 
(\lambda,\vartheta_2,\dotsc,\vartheta_n) \subset \ZZo[x_1]$ encoding points 
$\pp_1\infx\cdots\infx\pp_{\deltaP}$ of $\CcurRo$, such that $(\Ccur,\Pset)$ 
satisfies $\Hspace{}$.
  \Ensure a partition of $\{1,\dotsc,\deltaP\}$ grouping the indices of the 
$\pp_i$'s lying in the same \SACC of $\CcurRo$.
  \State $\Qparam \gets (\lambda,\vartheta_2);$
  \State $\qapp \gets \AppSing(\Rparam)$
  \State $\left[\Ggp,\; \Vseqapp,\: \VseqPparam\right] \gets 
\TopoPlane\left(\Rparam,\;\Qparam,\;\qapp\right);$
  \State $\Gg \gets \NodeRes(\Ggp,\,\Vseqapp);$
  \State\Return $\IndexConComp(\VseqPparam,\,\Gg)$;
\end{algorithmic}
\end{algorithm}

Correction, and complexity estimate, of Algorithm~\ref{alg:connectcurve}, 
follow directly from
Propositions~\ref{prop:procappsing}, 
\ref{prop:topoplane} and \ref{prop:algonode}.
This proves Theorem~\ref{thm:mainresult}.

\review{As mentioned before,}
the number of connected components of the graph $\Gg$ computed 
equals the number of \SACCs of $\CcurRo$.
\remi{
As an extension, for curves given as unions, Algorithm~\ref{alg:connectcurve} 
can be applied to each curve, where query points are extended to 
include pairwise common intersection points. The resulting subsets are then 
merged based on their shared points.}

\balance
\bibliography{biblio}
\bibliographystyle{ACM-Reference-Format}

\end{document}